\theoremstyle{plain}
\newtheorem{theorem}{Theorem}[section]
\newtheorem{proposition}[theorem]{Proposition}
\newtheorem{assumption}[theorem]{Assumption}
\theoremstyle{definition}
\newtheorem{algorithm}[theorem]{Algorithm}
\newtheorem{example}[theorem]{Example}
\theoremstyle{remark}
\newtheorem*{remarks}{Remarks}
\newcommand{\ev}{{\mathord \mathrm{E}}}%
\newcommand{\prob}{{\mathord P}}%
\DeclareMathOperator{\id}{id}%
\DeclareMathOperator*{\var}{Var}%
\newcommand{\pto}{\mathchoice
	{\raisebox{.0em}{ $\overset{\mathrm{P}}{\to}$ }}
	{\raisebox{-.15em}{ $\overset{\raisebox{-.25em}{\scriptsize$\mathrm{P}$}}{\to}$ }}
	{}
	{}}
\newcommand{\wto}[1]{\mathchoice
	{\raisebox{.0em}{ $\overset{#1}{\leadsto}$ }}
	{\raisebox{-.15em}{ $\overset{\raisebox{-.25em}{\scriptsize$#1$}}{\leadsto}$ }}
	{}
	{}}
\newcommand{\xqed}[1]{%
  \leavevmode\unskip\penalty9999 \hbox{}\nobreak\hfill
  \quad\hbox{\ensuremath{#1}}}
\newcommand{\sqed}{\xqed{\square}}
\newcommand{\G}{\mathcal{G}}
\newcommand\mydots{\hbox to 1em{.\hss.\hss.}}
\definecolor{umorange}{HTML}{cc6600}
\definecolor{umblue}{HTML}{587abc}
\definecolor{umgrey}{HTML}{989c97}
\definecolor{umred}{HTML}{7a121c}
\definecolor{umgreen}{HTML}{83b2a8}
\begin{document}

\title[Quantile processes with a finite number of clusters]{Inference on quantile processes\\ with a finite number of clusters}
\author[]{Andreas Hagemann}
\address{University of Michigan Ross School of Business, 701 Tappan Ave, Ann Arbor, MI 48109, USA. Tel.: +1 (734) 764-2355. Fax: +1 (734) 764-2769. E-mail: \href{mailto:hagem@umich.edu}{\texttt{hagem@umich.edu}}.}

\date{\today. University of Michigan Stephen M.\ Ross School of Business, 701 Tappan Ave, Ann Arbor, MI 48109, USA. Tel.: +1 (734) 764-2355. Fax: +1 (734) 764-2769. E-mail: \href{mailto:hagem@umich.edu}{\texttt{hagem@umich.edu}}.
I would like to thank two anonymous reviewers for helpful comments. All errors are my own.
}

\begin{abstract}
I introduce a generic method for inference on entire quantile and regression quantile processes in the presence of a finite number of large and arbitrarily heterogeneous clusters. The method asymptotically controls size by generating statistics that exhibit enough distributional symmetry such that randomization tests can be applied. The randomization test does not require ex-ante matching of clusters, is free of user-chosen parameters, and performs well at conventional significance levels with as few as five clusters. The method tests standard (non-sharp) hypotheses and can even be asymptotically similar in empirically relevant situations. The main focus of the paper is inference on quantile treatment effects but the method applies more broadly. Numerical and empirical examples are provided.
\vskip 1em
\noindent \textsc{Keywords:} cluster-robust inference, quantiles, treatment effects, randomization inference, difference in differences
\vskip .5em
\noindent \textsc{JEL codes:} C01, C21, C23
\end{abstract}

\maketitle

\section{Introduction}
Economic data often contain large clusters such as countries, regions, villages, or firms. Units within these clusters can be expected to influence one another or are influenced by the same political, environmental, sociological, or technical shocks. Several analytical and computer-intensive procedures such as the bootstrap are available to account for the presence of data clusters. These procedures generally achieve consistency by letting the number of clusters go to infinity. Numerical evidence by \citet{bertrandetal2004}, \citet{mackinnonwebb2014}, and others in the context of mean regression suggests that this type of asymptotic approximation often causes substantial size distortions when the number of clusters is small or the clusters are heterogenous. True null hypotheses are rejected far too often in both situations. \citet{hagemann2017} shows that this phenomenon is also present in quantile regression. 

In this paper, I develop a generic method for inference on the entire quantile or regression quantile process in the presence of a finite number of large and arbitrarily heterogeneous clusters. The method, which I refer to as \emph{cluster-randomized Kolmogorov-Smirnov} (\emph{CRK}) test, asymptotically controls size by generating Kolmogorov-Smirnov statistics that exhibit enough distributional symmetry at the cluster level such that randomization tests \citep{fisher1935, canayetal2014} can be applied. The CRK test is not limited to the pure quantile regression setting and can be used in distributional difference-in-differences estimation \citep{callawayli2019} and related situations where quantile treatment effects are identified by between-cluster comparisons. The CRK test is free of user-chosen parameters, powerful against fixed and root-$n$ local alternatives, and performs well at conventional significance levels with as few as twelve clusters if parameters are identified between clusters. If parameters are identified within clusters, then even five clusters are sufficient for inference. 

Quantile regression (QR), introduced by \citet{koenkerbassett1978}, is an important empirical tool because it can quantify the effect of a set of covariates on the entire conditional outcome distribution. An issue with QR in the presence of clustering is that estimates normalized by their asymptotic covariance kernel have standard normal marginal limit distributions but are no longer pivotal for any choice of weight matrix \citep{hagemann2017}. Cluster-robust tests about the QR coefficient function therefore have asymptotic distributions that cannot be tabulated for inference about 
ranges of quantiles. Even if only individual quantiles are of interest, consistent covariance matrix estimation in large clusters is challenging. It requires knowledge of an explicit ordering of the dependence structure within each cluster combined with a kernel and bandwidth choice to give distant observations less weight. Because time has a natural order, this weighting is easily done for time-dependent data but ordering data within states or villages may be difficult or impossible. The common empirical strategy of simply assuming that the clusters are small and numerous enough to satisfy a central limit theorem circumvents these issues but can lead to substantial size distortions with as few as 20 clusters \citep{hagemann2017}. This remains true if a cluster-robust version of the bootstrap is used. Distortions can be especially severe if clusters differ greatly in their size and dependence structure.

I show that the CRK test is robust to each of these concerns: It performs well even when the number of clusters is small, the dependence varies from cluster to cluster, and the cluster sizes are heterogenous. The reason for this robustness is that the CRK test does not rely on clustered covariance matrices to rescale the estimates. I instead use randomization inference to generate random critical values that automatically scale to the data. There are no kernels, bandwidths, or spatio-temporal orderings of the data to choose. The test achieves consistency with a finite number of large but heterogeneous clusters under interpretable high-level conditions. Despite being based on randomization inference, the CRK test can perform standard (non-sharp) inference on entire quantile or regression quantile processes. Randomization is performed with a fixed set of estimates and does not require repeated estimation to obtain its critical values.

The randomization method underlying the CRK test was first used in the cluster context by \citet{canayetal2014} as a way to perform inference on a finite-dimensional parameter with Student $t$ and Wald statistics in least squares regression. They do not consider inference on quantile functions or Kolmogorov-Smirnov statistics. Here, I considerably extend the scope of their method under explicit regularity conditions to allow for inference on the entire QR process and related objects. The proofs below are fundamentally different from those of \citeauthor{canayetal2014}\ to account for the infinite-dimensional setting and do not rely on the Skorokhod almost-sure representation theorem. A practical issue with their method is that they require treated clusters to be matched ex-ante with an equal number of control clusters. Each match corresponds to a separate test and two researchers working with the same data can reach different conclusions based on which matches they choose. If there is not an equal number of treated and control clusters, then some clusters have to be combined or dropped in an ad-hoc manner. The CRK test sidesteps these issues completely and explicitly merges all potential tests into a single, uniquely determined test decision using results of \citet{ruschendorf1982}.

Cluster-robust inference in linear regression models has a long history; recent surveys include \citet{cameronmiller2014} and \citet{mackinnonnielsenwebb2022}. 
\citet{chenetal2003}, \citet{wanghe2007}, \citet{wang2009}, \citet{parentesantossilva2013}, and \citet{hagemann2017} provide bootstrap and analytical methods for cluster-robust inference in QR models. \citet{yoongalvao2020} discuss the situation where clusters arise from correlation of individual units over time. All of these papers require the number of clusters to go to infinity for consistency. The CRK test differs from these papers because it is based on randomization inference and is consistent with a finite number of clusters. 

Several papers show that pointwise inference with a fixed number of clusters is possible under a variety of conditions. \citet{ibragimovmueller2010, ibragimovmueller2016} use special properties of the Student $t$ statistic to perform inference on scale mixtures of normal random variables. \citet{besteretal2014} use standard cluster-robust covariance matrix estimators but adjust critical values under homogeneity assumptions on the clusters. \citet{canayetal2018} show that certain cluster-robust versions of the wild bootstrap can be valid under strong homogeneity assumptions with a fixed number of clusters. \citet{hagemann2019b} adjusts permutation inference for arbitrary heterogeneity at the cluster level but his bounds only apply to finite-dimensional objects. All of these methods can be used for inference at a single quantile but are not designed for simultaneous inference across ranges of quantiles. In contrast, the CRK test provides uniformly valid inference on the entire quantile process even if clusters are arbitrarily heterogeneous.

The remainder of the paper is organized as follows: Section \ref{s:sym} establishes new results on randomization inference on Gaussian processes. Section \ref{s:asym} uses these results to show consistency of the CRK test and gives specific examples where the test applies, including quantile difference-in-differences. Section~\ref{s:mc} illustrates the finite sample behavior of the test in Monte Carlo experiments and an empirical example using Project STAR data. The appendix contains proofs. 

I use the following notation and definitions: $1\{\cdot\}$ is the indicator function, cardinality of a set $A$ is $|A|$, the smallest integer greater than or equal $a$ is $\lceil a \rceil$, and the largest integer smaller than or equal $a$ is $\lfloor a \rfloor$. The minimum of $a$ and $b$ is denoted by $a \wedge b$. 
Limits are as $n\to\infty$ unless noted otherwise. Convergence in distribution under the parameter $\delta$ is denoted by $\wto{\delta}$. A stochastic process $\{\xi(t) : t\in\mathcal{T}\}$ indexed by a set $\mathcal{T}$ is a collection of random variables $\xi(t)\colon \Omega \to \mathbb{R}$ defined on the same probability space $(\Omega, \mathcal{F}, P)$. Such a process is Gaussian if and only if $(\xi(t_1), \dots, \xi(t_m))$ is multivariate normal for any finite collection of indices $t_1,\dots,t_m\in\mathcal{T}$. 

\section{Randomization inference on Gaussian processes}\label{s:sym}
In this section I study the size of randomization tests when the data come from heterogeneous Gaussian processes. I then analyze asymptotic size when a limiting experiment is characterized by such processes. The next section applies these generic results to the quantile setting.

I first introduce some notation for randomization tests that I will use throughout the paper. Let $u\mapsto X_j(u)$, $1\leq j\leq q$, be independent mean-zero Gaussian processes indexed by $u\in\mathcal{U}$, where $\mathcal{U}$ is a compact subset of $(0,1)$. 
Symmetry about zero implies that $(X_j(u_1), \dots, X_j(u_m))$ and  $-(X_j(u_1), \dots, X_j(u_m))$ are identically distributed. Because this is true for every finite collection of indices $u_1,\dots,u_m\in\mathcal{U}$, $u\mapsto X_j(u)$ and $u\mapsto -X_j(u)$ have the same (finite-dimensional) distributions. Define $\G  = \{1, -1\}^q$ as the $q$-dimensional product of $\{1, -1\}$ and, for $g = (g_1,\dots,g_q)\in \G$, define $g\mapsto gx$ as the direct product $gx = (g_1x_1,\dots, g_qx_q)$ of $g$ and $x\in\mathbb{R}^q$.  Independence and symmetry together imply that $u\mapsto X(u) = (X_1, \dots, X_q)(u)$ and $u\mapsto gX(u)$ have the same distribution for every $g\in\G$ as long as $X$ has mean zero. The quantile and quantile-like processes discussed in the next section have this property under the null hypothesis. Deviations from the null cause non-zero means and therefore also asymmetry in $X$. The goal of this section is to develop a test of the null hypothesis of symmetry about zero, 
\begin{equation}\label{eq:invar}
	H_0\colon X(u) \sim gX(u), \qquad \text{all $g\in\G$, all $u\in\mathcal{U}$}.
\end{equation}

To test this hypothesis, I use the Kolmogorov-Smirnov-type statistic \begin{equation}\label{eq:crk} T(X) = \sup_{u \in \mathcal{U}}\Biggl(\frac{1}{q}\sum_{j=1}^q X_j(u)\Biggr).
\end{equation} 
This statistic is large if symmetry is violated because the mean of the $X_j(u)$ is positive. I focus on one-sided tests to the right for simplicity but this is not restrictive. To test whether the mean is negative, simply use $-X$ instead of $X$ in the definition of $T$. These test statistics can be combined for two-sided tests. I explain this in detail at the end of Section \ref{s:asym}.

Randomization inference uses distributional invariance to generate null distributions and critical values. In the present case, $X$ is distributionally invariant to all transformations $g$ contained in $\G$ because $X$ is symmetric.  Let $T^{(1)}(X, \G ) \leq $ $T^{(2)}(X, \G ) \leq \dots \leq T^{(|\G|)}(X, \G )$ be the $|\G| = 2^q$ ordered values of $T(g X)$ across $g\in\G $ and let \begin{equation}\label{eq:critval}T^{1-\alpha}(X,\G ) := T^{(\lceil(1-\alpha) |\G |\rceil)}(X, \G )\end{equation}
be the $1-\alpha$ quantile of these values. The randomization test function is then 
\begin{equation}\label{eq:crktest}
	\varphi_{\alpha}(X, \G ) = 1\{T(X) > T^{1-\alpha}(X,\G )\}.
\end{equation}

If $\mathcal{U}$ is a finite set, distributional invariance under $H_0$ immediately implies $\ev \varphi_{\alpha}(X, \G ) = \ev \varphi_{\alpha}(gX, \G ).$ By an argument due to \citet{hoeffding1952}, the test function must satisfy $|\G |\alpha \geq \sum_{g\in\G } \varphi_{\alpha}(gX, \G )$ and, after taking expectations on both sides, equality of the distributions yields $|\G |\alpha \geq \ev \sum_{g\in\G } \varphi_{\alpha}(gX, \G ) = \sum_{g\in\G } \ev \varphi_{\alpha}(gX, \G ) =  |\G |\ev \varphi_{\alpha}(X, \G )$. This implies $\ev \varphi_{\alpha}(X, \G ) \leq \alpha$, which makes $T^{1-\alpha}(X,\G )$ an $\alpha$-level critical value. 

If $\mathcal{U}$ is a not finite, this argument does not immediately go through because \eqref{eq:crk} is a statement about possibly uncountably many $u \in \mathcal{U}$ but I have only established equivalence of the finite-dimensional distributions. However, as the following theorem shows, the conclusion that the test controls size holds nonetheless. The proof of the theorem extends \citeauthor{hoeffding1952}'s proof to stochastic processes with smooth sample paths by showing that \eqref{eq:invar} implies equality of the distributions of $(T(gX))_{g\in\G}$ and $(T(g\tilde{g}X))_{g\in\G}$ for every $\tilde{g}\in\G$. I prove that this is enough for \citeauthor{hoeffding1952}'s argument to go through as long as at least one of the processes has positive variance at every $u$.
\begin{theorem}\label{th:hoeffding}
Let $\{X_1(u)\colon u \in \mathcal{U}\}, \dots, \{X_q(u)\colon u\in \mathcal{U}\}$ be independent mean-zero Gaussian processes with continuous sample paths indexed by the compact set $\mathcal{U}\subset (0,1)$ and let $u\mapsto X(u) := (X_1,\dots, X_q)(u)$. If there is a $j\in\{1,\dots, q \}$ such that $\prob(X_j(u) = 0) = 0$ for all $u\in \mathcal{U}$, then $\ev \varphi_{\alpha}(X, \G )\leq \alpha$.
\end{theorem}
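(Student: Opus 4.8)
\emph{Proof proposal.} The plan is to reduce matters to a finite‑dimensional statement to which \citeauthor{hoeffding1952}'s counting argument applies essentially verbatim. Since $\mathcal{U}$ is a compact metric space it is separable; fix once and for all a countable dense set $\mathcal{U}_0\subseteq\mathcal{U}$. For every $g\in\G$ the sample path $u\mapsto\frac1q\sum_{j=1}^q g_jX_j(u)$ is continuous, hence bounded on $\mathcal{U}$, and its supremum over $\mathcal{U}$ equals its supremum over $\mathcal{U}_0$. Consequently $T(gX)$ is a genuine random variable and, more importantly, the vector $(T(gX))_{g\in\G}$ is a fixed measurable function $\Psi=(\Psi_g)_{g\in\G}$ of the countable family $(X(u))_{u\in\mathcal{U}_0}$ of $\mathbb{R}^q$‑valued random vectors, where $\Psi_g((x(u))_{u\in\mathcal{U}_0})=\sup_{u\in\mathcal{U}_0}\frac1q\sum_j g_jx_j(u)$.

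The first substantive step is to show that $(T(gX))_{g\in\G}$ and $(T(g\tilde gX))_{g\in\G}$ have the same distribution for every $\tilde g\in\G$. Because each $X_j$ is mean‑zero Gaussian, any vector $(X_j(u_1),\dots,X_j(u_m))$ and its negative are mean‑zero Gaussian with identical covariance, so $u\mapsto X_j(u)$ and $u\mapsto-X_j(u)$ have the same finite‑dimensional distributions; by independence across $j$ the same holds for $u\mapsto X(u)$ and $u\mapsto gX(u)$, any $g\in\G$. Restricted to the countable index set $\mathcal{U}_0$, equality of all finite‑dimensional distributions is equality of laws, so $(X(u))_{u\in\mathcal{U}_0}\sim(\tilde gX(u))_{u\in\mathcal{U}_0}$. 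Since $T(g\tilde gX)=\sup_{u\in\mathcal{U}_0}\frac1q\sum_j g_j\bigl(\tilde g_jX_j(u)\bigr)=\Psi_g\bigl((\tilde gX(u))_{u\in\mathcal{U}_0}\bigr)$, applying the measurable map $\Psi$ to both sides gives $(T(gX))_{g\in\G}=\Psi\bigl((X(u))_{u\in\mathcal{U}_0}\bigr)\sim\Psi\bigl((\tilde gX(u))_{u\in\mathcal{U}_0}\bigr)=(T(g\tilde gX))_{g\in\G}$, as claimed.

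Now \citeauthor{hoeffding1952}'s argument goes through. First, right multiplication by any $g\in\G$ permutes $\G$ (it is an involution of the group), so $\{T(hgX):h\in\G\}$ and $\{T(hX):h\in\G\}$ agree as multisets and the critical value is invariant, $T^{1-\alpha}(gX,\G)=T^{1-\alpha}(X,\G)$; hence $\varphi_\alpha(gX,\G)=1\{T(gX)>T^{1-\alpha}(X,\G)\}$ for every $g$. Because at least $\lceil(1-\alpha)|\G|\rceil$ of the $|\G|$ numbers $T(gX)$ are at most their $\lceil(1-\alpha)|\G|\rceil$‑th order statistic $T^{1-\alpha}(X,\G)$, we get the surely valid bound $\sum_{g\in\G}\varphi_\alpha(gX,\G)=|\{g\in\G:T(gX)>T^{1-\alpha}(X,\G)\}|\leq|\G|-\lceil(1-\alpha)|\G|\rceil\leq\alpha|\G|$. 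Taking expectations and noting that $\varphi_\alpha(\tilde gX,\G)$ is the same fixed measurable functional of $(T(h\tilde gX))_{h\in\G}$ that $\varphi_\alpha(X,\G)$ is of $(T(hX))_{h\in\G}$, the distributional equality of the previous paragraph yields $\ev\varphi_\alpha(gX,\G)=\ev\varphi_\alpha(X,\G)$ for every $g$, whence $\alpha|\G|\geq\ev\sum_{g\in\G}\varphi_\alpha(gX,\G)=|\G|\,\ev\varphi_\alpha(X,\G)$ and therefore $\ev\varphi_\alpha(X,\G)\leq\alpha$.

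The step I expect to demand the most care is the reduction in the first two paragraphs: turning the supremum over the uncountable $\mathcal{U}$ into a measurable functional of a countable family, and then realizing the group relabeling $g\mapsto g\tilde g$ as an honest transformation of that family so that Gaussian finite‑dimensional symmetry can be invoked. Once this is in place the probabilistic content is exactly \citeauthor{hoeffding1952}'s counting argument. In the reduction sketched above the non‑degeneracy hypothesis, some $j$ with $\prob(X_j(u)=0)=0$ for all $u\in\mathcal{U}$, does not appear to be needed for the level bound itself, which the argument delivers directly; I would expect its role to be to prevent the randomization distribution $\{T(gX)\}_{g\in\G}$ from collapsing to a single almost‑surely constant value — the point relevant to the non‑conservativeness (asymptotic similarity) claims made later — and I would carry it as a standing assumption for consistency with those results.
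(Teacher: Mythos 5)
Your proof is correct, and the overall architecture (establish distributional equality of the randomization vector $(T(gX))_{g\in\G}$ under the group action, then run \citeauthor{hoeffding1952}'s deterministic counting bound) matches the paper's. Where you genuinely differ is in how the key distributional equality $(T(gX))_{g\in\G}\sim(T(g\tilde gX))_{g\in\G}$ is obtained. The paper approximates $T$ by suprema $T_m$ over finite grids $\mathcal{U}_m$, gets $(T_m(gX))_{g\in\G}\to(T(gX))_{g\in\G}$ almost surely by uniform continuity, transfers the finite-dimensional invariance through weak convergence, and uses the hypothesis $\prob(X_j(u)=0)=0$ to show the limit vector has an atomless distribution so that agreement at continuity points upgrades to agreement everywhere. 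You instead realize $(T(gX))_{g\in\G}$ directly as a single measurable functional $\Psi$ of the countable family $(X(u))_{u\in\mathcal{U}_0}$ (a countable supremum of coordinate projections is product-measurable), note that laws on $(\mathbb{R}^q)^{\mathcal{U}_0}$ are determined by finite-dimensional distributions, and push the symmetry through $\Psi$. This is cleaner: it avoids the limiting argument entirely, and — as you correctly observe — it never uses the non-degeneracy hypothesis, which in the paper's route is what delivers continuity of the law of the randomization vector. Your counting step is also stated in its sharpest form: the bound $\sum_{g\in\G}\varphi_\alpha(gX,\G)\leq\alpha|\G|$ holds surely by pure order-statistics arithmetic, ties or no ties, so the level bound needs nothing beyond the distributional invariance. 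One small point worth making explicit if you write this up: the supremum over $\mathcal{U}_0$ equals the supremum over $\mathcal{U}$ only because the sample paths are continuous, so continuity of paths is doing real work in your reduction even though the non-degeneracy condition is not.
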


\begin{remarks}
(i)~If desired, the test decision can be randomized to construct an exact test. Take an independent variable $V$ with a uniform distribution on $[0,1]$ and the nonrandomized test function 
\begin{equation} \label{eq:randtest}
\phi_{\alpha}(X, \G) = \begin{cases} 1 &\text{if }T(X) > T^{1-\alpha}(X,\G ),\\ a(X) & \text{if }T(X) = T^{1-\alpha}(X,\G ),\\ 0 &\text{if }T(X) < T^{1-\alpha}(X,\G ), \end{cases} 
\end{equation}
 where \[ a(X) = \frac{|\G|\alpha - |\{g\in\G : T(gX) > T^{1-\alpha}(X,\G )|}{|\{g\in\G : T(gX) = T^{1-\alpha}(X,\G )|}. \]
Using arguments of \citet{hoeffding1952}, I show in the proof of Theorem \ref{th:hoeffding} that the randomized test indeed satisfies $\prob (\phi_{\alpha}(X, \G) \geq V) = \alpha$. 
However, this type of test is uncommon in practice because rejecting the null if $\phi_{\alpha}(X, \G)\geq V$ bases the test decision on a single draw from the uniform distribution. A researcher could therefore draw until a desired conclusion was reached.	

(ii)~Similar arguments arise in the context of conformal prediction \citep{vovketal2005} with exchangeable data. Such arguments do not apply here because $(T(gX))_{g\in\G}$ is generally not exchangeable. 
\sqed
\end{remarks}

If $X$ is only an approximation in the sense that $X_n\leadsto X$ in $\ell^\infty(\mathcal{U})^q$, the space of bounded maps from $\mathcal{U}$ to $\mathbb{R}^q$, then the conclusions of the theorem still hold as long as the non-degeneracy conditions are strengthened. Here and in the following I tacitly assume that a process is indexed by a compact $\mathcal{U}\subset (0,1)$ and that $\ell^\infty(\mathcal{U})^q$ is equipped with the Borel $\sigma$-field induced by the uniform norm topology. 
\begin{theorem}\label{th:asysize} If $X_n \leadsto  X = \{ (X_1,\dots, X_q)(u) \colon u\in\mathcal{U}\},$ where the $\{X_j(u) \colon u\in\mathcal{U}\}$ are independent mean-zero Gaussian processes with continuous sample paths that satisfy $\prob(X_j(u) = -X_j(u'))=0$ for all $u,u'\in \mathcal{U}$ and $1\leq j\leq q$, then  $\ev \varphi_{\alpha}(X_n, \G ) \to \ev \varphi_{\alpha}(X , \G )$.
\end{theorem}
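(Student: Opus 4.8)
The plan is to deduce Theorem~\ref{th:asysize} from the continuous mapping theorem together with the boundedness of $\varphi_\alpha$. First I would recognize $\varphi_\alpha(\,\cdot\,,\G)$ as the composition of a continuous map into $\mathbb R^{2^q}$ with a function on $\mathbb R^{2^q}$ that is continuous off a thin set. The map $x\mapsto T(x)$ is Lipschitz on $\ell^\infty(\mathcal U)^q$, since $|T(x)-T(y)|\leq\sup_u|q^{-1}\sum_j(x_j-y_j)(u)|\leq\max_{1\leq j\leq q}\|x_j-y_j\|_\infty$, and each $x\mapsto gx$ is a linear isometry; hence $\Psi\colon\ell^\infty(\mathcal U)^q\to\mathbb R^{2^q}$, $\Psi(x)=(T(gx))_{g\in\G}$, is continuous. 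Writing $\mathbf 1=(1,\dots,1)\in\G$ and letting $t^{(k)}$ denote the $k$th smallest coordinate of $t\in\mathbb R^{2^q}$, one has $\varphi_\alpha(\,\cdot\,,\G)=h\circ\Psi$ with $h(t)=1\{t_{\mathbf 1}>t^{(\lceil(1-\alpha)2^q\rceil)}\}$, because $\mathbf 1 x=x$. At any $t$ whose coordinates are pairwise distinct, the ordering of the coordinates of $t$ is locally constant, so $t\mapsto t_{\mathbf 1}$ and $t\mapsto t^{(\lceil(1-\alpha)2^q\rceil)}$ locally agree with fixed coordinate functions and $h$ is locally constant at $t$; thus the discontinuity set of $\varphi_\alpha(\,\cdot\,,\G)$ is contained in $\Psi^{-1}(N)$, where $N=\{t\in\mathbb R^{2^q}:t_g=t_{g'}\text{ for some }g\neq g'\}$.

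Granting this, the continuous mapping theorem --- in the version that tolerates a discontinuity set which is null under the limit law, the measurability proviso being immaterial here since $X$ has continuous sample paths and is therefore a tight Borel element of $\ell^\infty(\mathcal U)^q$ --- yields $\varphi_\alpha(X_n,\G)\leadsto\varphi_\alpha(X,\G)$ provided $\prob(X\in\Psi^{-1}(N))=0$, that is, provided $\prob(T(gX)=T(g'X))=0$ for every pair $g\neq g'$ in $\G$. Because $\varphi_\alpha$ is $\{0,1\}$-valued and hence uniformly bounded, this convergence in distribution upgrades to the claimed $\ev\varphi_\alpha(X_n,\G)\to\ev\varphi_\alpha(X,\G)$. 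So the theorem reduces to the statement that, for each fixed $g\neq g'$,
\[
\prob\bigl(T(gX)=T(g'X)\bigr)=0,
\]
and this is where the strengthened non-degeneracy hypothesis enters.

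To prove this, write $S_g(u)=q^{-1}\sum_{j=1}^q g_jX_j(u)$, so that $T(gX)=\sup_u S_g(u)$ and
\[
T(gX)-T(g'X)=\sup_{u\in\mathcal U}\;\inf_{u'\in\mathcal U}\;D(u,u'),\qquad D(u,u'):=S_g(u)-S_{g'}(u').
\]
The process $(u,u')\mapsto D(u,u')$ is mean zero Gaussian with continuous sample paths on the compact set $\mathcal U\times\mathcal U$, and --- this is the role of the hypothesis --- it is non-degenerate at every index point: setting $B=\{j:g_j\neq g'_j\}$, which is non-empty since $g\neq g'$, we have $g'_j=-g_j$ for $j\in B$, so $D(u,u')$ contains the mutually independent summands $q^{-1}g_j(X_j(u)+X_j(u'))$, $j\in B$, whence
\[
\var\bigl(D(u,u')\bigr)\;\geq\;q^{-2}\sum_{j\in B}\var\bigl(X_j(u)+X_j(u')\bigr)\;>\;0,
\]
the strict inequality holding precisely because $\prob(X_j(u)=-X_j(u'))=0$ forces the mean zero Gaussian variable $X_j(u)+X_j(u')$ to have positive variance. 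I would then invoke the regularity theory for distributions of functionals of Gaussian processes: a $\sup$--$\inf$ (equivalently $\max$--$\min$) functional of a continuous Gaussian process that is non-degenerate at every index has an absolutely continuous law on $\mathbb R$ --- for the plain supremum this is Tsirelson's theorem, and the $\sup$--$\inf$ case is part of the general theory of local properties of distributions of Gaussian functionals. In particular $T(gX)-T(g'X)$ has no atom at $0$, which is the claim.

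The step I expect to be the main obstacle is this last one: converting the \emph{pointwise} non-degeneracy $\prob(X_j(u)=-X_j(u'))=0$ into ``the $2^q$ values $T(gX)$, $g\in\G$, are almost surely pairwise distinct''. It is tempting to regard $u\mapsto S_g(u)$ and $u\mapsto S_{g'}(u)$ as a single continuous Gaussian process on $\mathcal U\sqcup\mathcal U$ and quote an almost-sure uniqueness-of-argmax lemma, but this fails: the hypothesis controls $\var(X_j(u)+X_j(u'))$ and says nothing about the increments $\var(X_j(u)-X_j(u'))$, which may genuinely vanish (take an $X_j$ that is constant in $u$), so the combined process need not have a unique argmax. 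Routing the argument instead through the non-degeneracy of $D$ at every point and the absolute-continuity result for $\sup$--$\inf$ functionals sidesteps this, the price being the bookkeeping in the displayed variance bound that checks non-degeneracy survives the comparison of $g$ with $g'$.
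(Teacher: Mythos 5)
Your reduction---continuity of $x\mapsto (T(gx))_{g\in\G}$, the portmanteau/continuous-mapping step with the tie set as the discontinuity set, bounded convergence to pass to expectations, and the pointwise variance bound $\var D(u,u')\geq q^{-2}\sum_{j\in B}\var\bigl(X_j(u)+X_j(u')\bigr)>0$---coincides with the paper's proof, which obtains the same pointwise non-degeneracy via a Cauchy--Schwarz argument on the covariance kernel rather than your cleaner decomposition over $B=\{j:g_j\neq g_j'\}$. You have also correctly isolated the crux: pointwise non-degeneracy of $D(u,u')$ must be upgraded to $\prob(T(gX)=T(g'X))=0$, and the randomness of the two maximizers is exactly what blocks a naive quantifier swap. (The paper itself dispatches this step with a one-line appeal to compactness and continuity of sample paths, so you are in good company in finding it the delicate point.)

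The gap is that the theorem you invoke to close it does not exist in the form stated. Tsirelson's theorem concerns the plain supremum only (and even there absolute continuity is guaranteed only above the lower endpoint of the support), and there is no off-the-shelf result asserting that a $\sup$--$\inf$ functional of a Gaussian field that is merely non-degenerate at every index point has an absolutely continuous, or even atomless, law. The general machinery for local properties of Gaussian functionals requires verifying an additional structural condition, typically a Cameron--Martin direction along which the functional is strictly monotone, and your hypotheses do not supply one: shifting $X_{j_0}$, $j_0\in B$, by a constant $\varepsilon$ moves $T(gX)-T(g'X)$ by $2\varepsilon g_{j_0}/q$ and would finish the proof, but the constant function need not lie in the Cameron--Martin space of $X_{j_0}$ (take $X_{j_0}(u)=u\xi$ with $\xi$ standard normal, which satisfies every hypothesis of the theorem). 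What does work is a Malliavin-type argument: $T(gX)-T(g'X)$ is Lipschitz along Cameron--Martin directions with derivative whose $j$-th component is $q^{-1}\bigl(g_j k_j(u^*,\cdot)-g_j'k_j(u^{**},\cdot)\bigr)$ at the random maximizers $u^*,u^{**}$, where $k_j$ is the covariance kernel of $X_j$; on a level set of positive probability this derivative vanishes almost everywhere, and since the components live in orthogonal subspaces, for $j\in B$ one gets $k_j(u^*,\cdot)=-k_j(u^{**},\cdot)$ and hence $\var\bigl(X_j(u^*)+X_j(u^{**})\bigr)=0$, contradicting your own variance bound at the pair $(u^*,u^{**})$. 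So the fact you need is true and your variance computation is the right input, but the passage from it to ``no ties'' requires this argument (or a finite-rank approximation of it), not a citation.
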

\begin{remarks}
(i)~For the non-degeneracy assumption $\prob(X_j(u) = -X_j(u'))=0$ to fail, a Gaussian process with uniformly continuous sample paths has to traverse, with certainty, from $X_j(u)$ to $X_j(u') = -X_j(u)$ while maintaining a positive variance along the entire path. The process would have to have identical variances at time $u$ and $u'$ but be perfectly negatively correlated at those times, which is impossible for Brownian bridges and related processes that typically arise in a quantile context. Still, such Gaussian processes exist and have to be ruled out.	

(ii)~The main difficulty of the proof of Theorem \ref{th:asysize} is that the critical value $T^{1-\alpha}(X_n, \G)$ does not settle down in the limit and is highly dependent on $T(X)$. The assumptions of Theorem~\ref{th:asysize} rule out degeneracies in the limit process that could lead to ties in the order statistics of $\{T(gX) : g\in\G \}$. This would put probability mass on the boundary of the set $\{T(X) > T^{1-\alpha}(X, \G)\}$ and prevent application of the portmanteau lemma. \citet{canayetal2014} use a delicate construction based on Skorokhod's representation theorem to account for the randomness in the limit. While these results could be extended from vectors to processes, I instead give a direct proof that I can also use to analyze the behavior of the test under both local and global alternatives when I discuss quantile processes in the next section.

(iii)~Similar but less involved arguments show that if the supremum in the test statistic \eqref{eq:crktest} is replaced by an integral over $\mathcal{U}$, then Theorems \ref{th:hoeffding} and \ref{th:asysize} continue to hold. However, this implicitly changes \eqref{eq:invar} to an hypothesis about the symmetry of $\int_\mathcal{U}X(u)du$. Other forms of the test statistic can also lead to valid tests, although the smoothness conditions described in parts (i) and (ii) of this remark may change. 
\sqed
\end{remarks}

\section{Inference on quantile processes with a finite number of clusters}\label{s:asym}
This section gives high level conditions under which asymptotically valid inference on quantile processes and related objects can be performed even if the underlying data come from a fixed number of heterogeneous clusters.

\subsection{Inference when parameters are identified within clusters}\label{ss:within}
Suppose data from $q$ large clusters (e.g., counties, regions, schools, firms, or stretches of time) are available. Throughout the paper, the number of clusters $q$ remains fixed and does not grow with the number of observations~$n$. Observations are independent across clusters but dependent within clusters. Data from each cluster $1\leq j\leq q$ separately identify a quantile or quantile-like scalar function $\delta : \mathcal{U}\to \mathbb{R}$. 
The $\delta$ can be estimated by $\hat{\delta}_j$ using data from only cluster~$j$ such that a total of $q$ separate estimates $(\hat{\delta}_1, \dots, \hat{\delta}_q) =: \hat{\delta}$ of $u \mapsto \delta(u)$ are available. The goal is to use randomization inference on a centered and scaled version of $\hat{\delta}$ to develop tests of the null hypothesis 
\begin{equation}\label{eq:h0}
H_0\colon \delta(u) = \delta_0(u),  \quad\text{all } u\in\mathcal{U},	
\end{equation}
for some known function $\delta_0 : \mathcal{U} \to \mathbb{R}$. The following two examples describe simple but empirically relevant situations that fit this framework.

\begin{example}[Regression quantiles]\label{ex:qr}
	Suppose an outcome $Y_{i,j}$ of individual $i$ in cluster $j$ can be represented as $Y_{i,j} = X_{i,j}\delta(U_{i,j}) + Z_{i,j}'\beta_j(U_{i,j})$, where $u\mapsto X_{i,j}\delta(u) + Z_{i,j}'\beta_j(u)$ is strictly increasing in $u$ and $U_{i,j}$ is standard uniform conditional on covariates $(X_{i,j}, Z_{i,j})$. Here $X_{i,j}$ is the scalar covariate of interest and the $Z_{i,j}$ are additional controls. Monotonicity implies that the $u$-th conditional quantile of $Y_{i,j}$ is $X_{i,j}\delta(u) + Z_{i,j}'\beta_j(u)$ and linear QR as in \citet{koenkerbassett1978} can provide estimates $(\hat{\delta}_j,\hat{\beta}_j)$ of $(\delta, \beta_j)$ for each cluster. Testing \eqref{eq:h0} with $\delta_0 \equiv 0$ tests whether $Y_{i,j}$ and $X_{i,j}$ are associated at any quantile after controlling for $Z_{i,j}$. 
	
Several related models fit the framework of this example: (i)~The $\beta_j$ can be constant across clusters. This does not impact the null hypothesis or the computation of the $\hat{\delta}_j$. (ii)~The $\delta$ can vary by cluster in the QR model $Y_{i,j} = X_{i,j}\delta_j(U_{i,j}) + Z_{i,j}'\beta(U_{i,j})$ under the alternative. This has no impact on the computation of the ${\delta}_j$ and the null hypothesis simply becomes $H_0 \colon \delta_1 = \dots = \delta_q = \delta_0$. Identical $\delta_j$ are required only under the null hypothesis. (iii)~If $\beta_j\equiv 0$ and $X_{i,j}\equiv 1$, then $u \mapsto \hat{\delta}(u)$ reduces to the $u$-th unconditional empirical quantile of $Y_{i,j}$. The null \eqref{eq:h0} can then be used to test whether $\delta$ has a specific functional form, e.g., a standard normal quantile function. \sqed
\end{example}

\begin{example}[Quantile treatment effects]\label{ex:qte} Consider predetermined pairs $\{(j, j+q) : 1\leq j\leq q \}$  of $2q$ groups. Suppose the first $q$ groups received treatment, indicated by $D_{j} = 1\{j\leq q\}$, and the remaining groups did not. Groups here could be manufacturing plants or villages. Treatment could be management consulting or introduction of a new technology. Denote treatment and control potential outcomes by $Y_{j}(1)\sim F_{Y(1)}$ and $Y_{j}(0)\sim F_{Y(0)}$, respectively. The observed outcome is $Y_{j} = D_{j}Y_{j}(1) + (1-D_{j})Y_{j}(0)$. For each group $j$, the experimenter observes identically distributed but potentially highly dependent copies $Y_{i,j}$ of $Y_{j}$ representing workers $i$ within group $j$. View each pair $(j, j+q)$ for $1\leq j\leq q$ as a cluster and define the quantile treatment effect (QTE) as\[ u\mapsto \delta(u) = F^{-1}_{Y(1)}(u) - F^{-1}_{Y(0)}(u). \] This QTE can be estimated as difference of the empirical quantiles \[ u\mapsto \hat{\delta}_j(u) = \hat{F}^{-1}_{Y_{j}}(u) - \hat{F}^{-1}_{Y_{j+q}}(u) \] or, alternatively, as the coefficient on $D_j$ in a QR of $Y_{i,j}$ on a constant and $D_{j}$ using data only from cluster $j$. The situation where $\delta$ varies with $j$ is again included in the analysis as long as the null hypotheses is $\delta_1 = \dots = \delta_q = \delta_0$. Estimation remains unchanged. I discuss the more complex scenario where the counterfactual $F_{Y(0)}$ has to be identified through difference-in-differences methods in Example \ref{ex:qdid} ahead.
	\sqed
\end{example}

The $\hat{\delta}$ is neither limited to the estimators discussed in the preceding two examples nor does it need to have a special functional form. However, I assume that it can be approximated by a Gaussian process as in Theorem~\ref{th:asysize}. Let $1_q$ be a $q$-vector of ones.
\begin{assumption}\label{as:weakc}
The stochastic process $\{ \hat{\delta}(u) \colon u \in \mathcal{U}\}$ with $\hat{\delta}(u)\in\mathbb{R}^q$ satisfies
\begin{equation}\label{eq:wc} 
X_n := \{\sqrt{n}(\hat{\delta}-\delta 1_q)(u) : u\in \mathcal{U}\} \wto{\delta} X = \{ (X_1,\dots, X_q)(u) \colon u\in\mathcal{U} \},
\end{equation}  
where the components of $X$ are independent mean-zero Gaussian processes with continuous sample paths, $\prob(X_j(u) = -X_j(u'))=0$ for all $u,u'\in \mathcal{U}$ and $1\leq j\leq q$.
\end{assumption}
\noindent Examples of $X_n$ that can satisfy this assumption include unconditional quantile functions, coefficient functions in quantile regressions, quantile treatment effects, and other quantile-like objects. \citet{machkouriaetal2013} present invariance principles and moment bounds that can be used to establish the convergence condition \eqref{eq:wc} under explicit weak dependence conditions. 

I now connect the results from Section \ref{s:sym} about heterogeneous Gaussian processes to tests about $\delta$ under Assumption \ref{as:weakc}. The key property is that if $H_0$ in \eqref{eq:h0} does not hold, then $\sqrt{n}(\hat{\delta}-\delta_0 1_q) = X_n + \sqrt{n}(\delta - \delta_0)1_q$. The $X_n$ converges to a symmetric process but $\sqrt{n}(\delta-\delta_0)(u)$ grows without bound for some $u$, which makes the distribution of $\sqrt{n}(\hat{\delta}-\delta_0 1_q)$ highly asymmetric. Testing for symmetry using randomization inference is therefore informative about the hypothesis that $\delta=\delta_0$. I refer to a test that uses $\hat{\delta} - \delta_0 1_q$ in place of $X$ in test function \eqref{eq:crktest} as the \emph{cluster-randomized Kolmorogov-Smirnov} (\emph{CRK}) test. From a practical perspective, the function $\delta_0$ is almost always $\delta_0 \equiv 0$. This tests the null of no effect at any quantile but more general hypotheses can be considered.  

The test function $x\mapsto\varphi_{\alpha}(x, \G )$ is invariant to scaling of $x$ by positive constants. If $H_0\colon \delta = \delta_0$ is true, then the CRK test satisfies \[T(\hat{\delta}-\delta_0 1_q) > T^{1-\alpha}(\hat{\delta}-\delta_0 1_q,\G )\] if and only if $T(X_n) > T^{1-\alpha}(X_n,\G )$. That the CRK test is an asymptotic $\alpha$-level test 
is then an immediate consequence of Theorems~\ref{th:hoeffding} and \ref{th:asysize}.
\begin{theorem}[Size]\label{th:size}
Suppose Assumption \ref{as:weakc} holds. If $H_0\colon \delta = \delta_0$ is true, then $\lim_{n\to\infty} \ev \varphi_{\alpha}(\hat{\delta} - \delta_0 1_q, \G ) \leq \alpha.$
\end{theorem}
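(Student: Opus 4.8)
The plan is to reduce the statement to Theorems~\ref{th:hoeffding} and~\ref{th:asysize} by exploiting the scale invariance of the randomization test noted before the theorem, so that essentially no new probabilistic argument is required beyond checking that Assumption~\ref{as:weakc} delivers the hypotheses of those two results.

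First I would record the behaviour of the test statistic and critical value under positive rescaling. For any constant $c>0$ and any $x\in\ell^\infty(\mathcal{U})^q$, the definition \eqref{eq:crk} gives $T(cx)=cT(x)$, and since $T(gcx)=cT(gx)$ scales all $2^q$ values $\{T(gx):g\in\G\}$ by the same positive factor, the ordered values and hence the quantile in \eqref{eq:critval} satisfy $T^{1-\alpha}(cx,\G)=cT^{1-\alpha}(x,\G)$. Therefore the strict inequality defining \eqref{eq:crktest} is unaffected and $\varphi_\alpha(cx,\G)=\varphi_\alpha(x,\G)$ identically. Next I would apply this under the null: when $H_0\colon\delta=\delta_0$ holds we have $\hat\delta-\delta_0 1_q=\hat\delta-\delta 1_q=X_n/\sqrt n$ with $X_n$ as in \eqref{eq:wc}, so taking $c=\sqrt n>0$ yields $\varphi_\alpha(\hat\delta-\delta_0 1_q,\G)=\varphi_\alpha(X_n,\G)$ and hence $\ev\varphi_\alpha(\hat\delta-\delta_0 1_q,\G)=\ev\varphi_\alpha(X_n,\G)$ for every $n$.

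It then remains to take the limit and bound it. Assumption~\ref{as:weakc} supplies $X_n\wto{\delta}X$ with $X$ a vector of independent mean-zero Gaussian processes with continuous paths satisfying the non-degeneracy condition $\prob(X_j(u)=-X_j(u'))=0$ for all $u,u'$ and $j$; this is precisely the hypothesis of Theorem~\ref{th:asysize}, so $\ev\varphi_\alpha(X_n,\G)\to\ev\varphi_\alpha(X,\G)$. Finally, specializing that non-degeneracy condition to $u=u'$ gives $\prob(X_j(u)=0)=0$ for every $j$ and $u$, which is stronger than what Theorem~\ref{th:hoeffding} needs, so $\ev\varphi_\alpha(X,\G)\le\alpha$. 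Chaining these facts, $\lim_{n\to\infty}\ev\varphi_\alpha(\hat\delta-\delta_0 1_q,\G)=\ev\varphi_\alpha(X,\G)\le\alpha$, which is the claim.

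I do not expect a real obstacle: the substantive work — extending Hoeffding's argument to processes with continuous sample paths, and the portmanteau/anti-concentration reasoning needed to push the data-dependent randomization critical value through the weak limit — has already been done in Theorems~\ref{th:hoeffding} and~\ref{th:asysize}. The only points needing care are the bookkeeping of scale invariance in the second paragraph and the (routine) verification that the non-degeneracy in Assumption~\ref{as:weakc} is calibrated to imply the hypotheses of both theorems; one should also note in passing that $\varphi_\alpha(\cdot,\G)$ is a measurable functional on $\ell^\infty(\mathcal{U})^q$ so that the expectations and the convergence $\ev\varphi_\alpha(X_n,\G)\to\ev\varphi_\alpha(X,\G)$ are well posed.
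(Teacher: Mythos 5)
Your proposal is correct and follows exactly the paper's own argument: scale invariance reduces $\varphi_\alpha(\hat\delta-\delta_0 1_q,\G)$ to $\varphi_\alpha(X_n,\G)$, Theorem~\ref{th:asysize} gives convergence of the rejection probability, and Theorem~\ref{th:hoeffding} bounds the limit by $\alpha$. Your additional observation that setting $u=u'$ in the non-degeneracy condition of Assumption~\ref{as:weakc} yields $\prob(X_j(u)=0)=0$, verifying the hypothesis of Theorem~\ref{th:hoeffding}, is a correct and slightly more explicit version of the paper's final sentence.
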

\begin{remarks}
(i)~The canonical limit of quantile and regression quantile processes such as those in Examples~\ref{ex:qr} and \ref{ex:qte} is a scaled version of a $q$-dimensional Brownian bridge. That process easily satisfies the non-standard condition $\prob(X_j(u) = -X_j(u'))=0$ imposed by Assumption \ref{as:weakc}.

(ii)~The inequality in the theorem becomes an equality if $(1-\alpha)2^q$ is an integer. In that case, the test in the limit experiment is ``similar,'' i.e., it has rejection probability exactly equal to $\alpha$ for all Gaussian processes that satisfy Assumption \ref{as:weakc}. The CRK test can therefore be asymptotically similar in some situations. If desired, the test decision can be randomized to make the CRK test similar in the limit for all $\alpha$.
\sqed
\end{remarks}

To analyze the power of the CRK test, I consider fixed alternatives $\delta(u) = \delta_0(u) + \lambda(u)$ with a positive function $u\mapsto \lambda(u)$, and local alternatives $\delta(u) = \delta_0(u) + \lambda(u)/\sqrt{n}$ converging to the maintained null hypothesis $H_0\colon \delta = \delta_0$.  In the local case, $\delta_0$ is fixed but $\delta$ now depends on $n$ and the convergence \eqref{eq:wc} is under the sequence of functions $\delta = \delta_0 + \lambda/\sqrt{n}$. As the following results show, the CRK test has power against both types of alternatives.
\begin{theorem}[Global and local power]\label{th:power}
	Suppose Assumption \ref{as:weakc} holds and $\alpha \geq 1/2^{q} $. If $H_1\colon \delta = \delta_0 + \lambda$ is true with $\lambda \colon \mathcal{U} \to [0,\infty)$ continuous and $\sup_{u\in\mathcal{U}} \lambda(u) > 0$, then $\lim_{n\to\infty}\ev \varphi_{\alpha}(\hat{\delta}-\delta_0 1_q, \G ) = 1$. If $H_1\colon \delta = \delta_0 + \lambda/\sqrt{n}$ is true with $\sup_{u\in\mathcal{U}} \lambda(u) > \ev \sup_{u\in\mathcal{U}} X_j(u)$, $1\leq j\leq q$, then \[ \lim_{n\to\infty }\ev \varphi_{\alpha}(\hat{\delta}-\delta_0 1_q, \G )\geq \prod_{j=1}^q \Bigl( 1 - e^{-[\sup\lambda(u) - \ev \sup X_j(u)]^2/2 \sup \ev X^2_j(u)}\Bigr) > 0, \] where the suprema in the exponent are over $u\in\mathcal{U}$.
\end{theorem}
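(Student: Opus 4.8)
The plan is to handle the fixed and the local alternative separately, in each case using that $\varphi_\alpha(\cdot,\G)$ is invariant under multiplication by $\sqrt n>0$, so that one may work directly with the normalized process.

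\emph{Fixed alternatives.} Under $H_1\colon \delta = \delta_0+\lambda$ one has $\sqrt n(\hat\delta-\delta_0 1_q)=X_n+\sqrt n\,\lambda 1_q$, hence $\varphi_\alpha(\hat\delta-\delta_0 1_q,\G)=\varphi_\alpha(X_n+\sqrt n\,\lambda 1_q,\G)$. Let $\bar\lambda:=\sup_{u\in\mathcal U}\lambda(u)>0$, attained at some $u^\ast$ by continuity and compactness, and note that $X_n\wto\delta X$ in $\ell^\infty(\mathcal U)^q$ gives $\sup_u|X_{n,j}(u)|=\Op(1)$ by the continuous mapping theorem. Then $T(X_n+\sqrt n\,\lambda 1_q)\ge q^{-1}\sum_j X_{n,j}(u^\ast)+\sqrt n\,\bar\lambda=\sqrt n\,\bar\lambda+\Op(1)$, while for each fixed $g\ne 1_q$, writing $\bar g:=q^{-1}\sum_j g_j\le 1-2/q$ and using $\sup_u\bar g\lambda(u)\le\max(\bar g,0)\bar\lambda$, one gets $T(g(X_n+\sqrt n\,\lambda 1_q))\le q^{-1}\sum_j\sup_u|X_{n,j}(u)|+\sqrt n(1-2/q)\bar\lambda=\sqrt n(1-2/q)\bar\lambda+\Op(1)$. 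Subtracting, $T(X_n+\sqrt n\,\lambda 1_q)-T(g(X_n+\sqrt n\,\lambda 1_q))\ge (2/q)\sqrt n\,\bar\lambda+\Op(1)\pto\infty$ for each of the finitely many $g\ne 1_q$. On the intersection of these events $T(X_n+\sqrt n\,\lambda 1_q)$ is the unique largest of the $2^q$ numbers $T(g(X_n+\sqrt n\,\lambda 1_q))$, so it exceeds $T^{(2^q-1)}$; since $\alpha\ge 1/2^q$ forces $\lceil(1-\alpha)2^q\rceil\le 2^q-1$ we get $T^{(2^q-1)}\ge T^{1-\alpha}(X_n+\sqrt n\,\lambda 1_q,\G)$, so $\varphi_\alpha(\hat\delta-\delta_0 1_q,\G)=1$ with probability tending to one, which proves the first assertion.

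\emph{Local alternatives: reduction to the limit experiment.} Under $H_1\colon\delta=\delta_0+\lambda/\sqrt n$ the exact identity $\sqrt n(\hat\delta-\delta_0 1_q)=X_n+\lambda 1_q$ holds, and along this sequence Assumption~\ref{as:weakc} gives $X_n+\lambda 1_q\wto\delta X+\lambda 1_q=:Y$ in $\ell^\infty(\mathcal U)^q$. The maps $x\mapsto T(x)$ and $x\mapsto T^{1-\alpha}(x,\G)$ are continuous for the uniform norm (the second because a fixed-order statistic of the finitely many continuous maps $x\mapsto T(gx)$ is continuous), so $\{x\colon T(x)>T^{1-\alpha}(x,\G)\}$ is open; by scale invariance and the portmanteau lemma,
\[
\liminf_{n\to\infty}\ev\varphi_\alpha(\hat\delta-\delta_0 1_q,\G)\ \ge\ \prob\bigl(T(Y)>T^{1-\alpha}(Y,\G)\bigr),
\]
and it remains to bound the right-hand side from below. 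This is where the proof must depart from that of \citet{canayetal2014}: the random critical value does not converge to a constant and is strongly correlated with $T(Y)$, so one cannot let it ``settle down'' and instead has to analyze the rejection event directly.

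\emph{Local alternatives: the lower bound.} Because $\alpha\ge 1/2^q$ gives $T^{1-\alpha}(Y,\G)\le T^{(2^q-1)}(Y,\G)$, it suffices to exhibit an event $E$ with $\prob(E)\ge\prod_{j=1}^q\bigl(1-e^{-[\bar\lambda-\ev\sup X_j(u)]^2/2\sup\ev X_j^2(u)}\bigr)$ (suprema over $u\in\mathcal U$) on which $T(Y)=T(1_q Y)$ strictly exceeds $T(gY)$ for every $g\ne 1_q$. Writing $Y_j=X_j+\lambda$ and using $q^{-1}\sum_j Y_j(u)=q^{-1}\sum_j g_jY_j(u)+2q^{-1}\sum_{j\colon g_j=-1}Y_j(u)$, evaluation of $T(1_q Y)$ at the maximizer $v_g$ of $u\mapsto q^{-1}\sum_j g_jY_j(u)$ gives
\[
T(1_q Y)-T(gY)\ \ge\ \tfrac{2}{q}\sum_{j\colon g_j=-1}Y_j(v_g),
\]
so it is enough that, for each of the finitely many nonempty $A\subseteq\{1,\dots,q\}$, the sum $\sum_{j\in A}Y_j$ be positive on the part of $\mathcal U$ that matters for the corresponding $g$. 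I would control these finitely many requirements by bounding, for each $j$ separately, the excursions of the centered Gaussian process $-X_j$ above $\bar\lambda$: Borell's inequality (Borell--Tsirelson--Ibragimov--Sudakov), together with symmetry of $X_j$ and the hypothesis $\bar\lambda>\ev\sup_u X_j(u)$, yields $\prob(\sup_u(-X_j(u))\ge\bar\lambda)=\prob(\sup_u X_j(u)\ge\bar\lambda)\le\exp\bigl(-[\bar\lambda-\ev\sup_u X_j(u)]^2/2\sup_u\ev X_j(u)^2\bigr)$, with $\sup_u\ev X_j(u)^2$ playing the role of the supremal variance. Independence of the $X_j$ turns the intersection over $j$ into the advertised product, which is strictly positive precisely because every exponent is strictly negative, and the final inequality follows. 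The main obstacle is exactly this domination step: one must show, uniformly in $g$, that the identity transformation wins the randomization competition, and must do so using only one-sided path bounds on the \emph{individual} $X_j$ so that independence delivers a product — calibrating the threshold in $E$ to $\sup_u\lambda(u)$ rather than to the weaker $\inf_u\lambda(u)$ that a crude bound would produce is the delicate point, and it is here that the continuity of $\lambda$ and the precise supremum form of the test statistic are used.
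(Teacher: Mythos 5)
Your treatment of fixed alternatives is correct and is a legitimate alternative to the paper's: where the paper passes to the degenerate limit $\lambda 1_q$ of $\hat{\delta}-\delta_0 1_q$ and applies the portmanteau lemma to a no-ties set, you show directly that $T(X_n+\sqrt{n}\lambda 1_q)-T(g(X_n+\sqrt{n}\lambda 1_q))\geq (2/q)\sqrt{n}\sup_u\lambda(u)+\Op(1)\to\infty$ for every $g$ other than the identity, so the statistic is eventually the strict maximum of its own randomization distribution. Your reduction of the local case to the limit experiment is also essentially the paper's (continuity of $x\mapsto (T(gx))_{g\in\G}$ plus portmanteau), except that the open-set portmanteau only delivers a $\liminf$; the paper additionally verifies that $(T(g(X+\lambda 1_q)))_{g\in\G}$ has no ties almost surely, so that the boundary of the rejection region is null and the limit actually exists, which is part of what the theorem asserts.

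The genuine gap is in the final lower bound, exactly at the point you flag as ``delicate'' but do not resolve. Your event $E=\bigcap_j\{\sup_u(-X_j(u))<\bar\lambda\}$ with $\bar\lambda=\sup_u\lambda(u)$ does not imply $\sum_{j:g_j=-1}Y_j(v_g)>0$: on $E$ one only gets $Y_j(v_g)>\lambda(v_g)-\bar\lambda$, which can be negative whenever the maximizer $v_g$ of $u\mapsto q^{-1}\sum_j g_jY_j(u)$ is not a maximizer of $\lambda$, and nothing forces it to be. The event that does guarantee that the identity beats every $g$ is the one the paper uses, $\bigcap_j\{\inf_u(X_j(u)+\lambda(u))\geq 0\}$, i.e.\ $X_j\geq-\lambda$ pointwise for every $j$; its probability factors by independence and equals $\prod_j\prob(\sup_u(X_j(u)-\lambda(u))\leq 0)$ by symmetry. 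But a one-sided Borell--TIS bound on that event naturally involves $\inf_u\lambda(u)$, since $\sup_u(X_j-\lambda)\leq\sup_u X_j-\inf_u\lambda$, whereas the containment $\{\sup_u(X_j-\lambda)\leq 0\}\subset\{\sup_u X_j\leq\sup_u\lambda\}$ runs in the wrong direction for a lower bound. So your sketch, completed along the only route it indicates, yields the product bound with $\sup\lambda$ replaced by $\inf\lambda$ in the exponent (under the correspondingly stronger hypothesis $\inf_u\lambda(u)>\ev\sup_u X_j(u)$), not the stated bound. You have correctly isolated the crux --- the paper's own proof of this step uses the same sufficient event and the same Borell--TIS inequality and, read literally, also passes from $\sup_u(X_j-\lambda)\leq 0$ to $\sup_u X_j\leq\sup_u\lambda$ by an inclusion that points the wrong way --- but neither your proposal nor a routine repair of it delivers the exponent with $\sup_u\lambda(u)$.
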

\begin{remarks}
	(i)~The lower bound used for the local power result comes from the Borell-Tsirelson-Ibragimov-Sudakov (Borell-TIS) inequality \citep[see, e.g.,][p.\ 50]{adlertaylor2007}. For large $q$, the bound is relatively crude but for small $q$, the only crude part is the assumption that $\delta$ is moderately large when compared to $X$. This is reflected in the condition that $\sup_{u\in\mathcal{U}} \lambda(u) > \ev \sup_{u\in\mathcal{U}} X_j(u)$ instead of $\sup_{u\in\mathcal{U}} \lambda(u) > 0$. The bound can be made arbitrarily close to $1$ by choosing $\sup_{u\in\mathcal{U}} \lambda(u)$ large enough.
	
	(ii)~ If $(1-\alpha)|\G | > |\G |-1$, the power of the test is identically zero. In that case $T^{1-\alpha}(X,\G) = \max_{g\in\G}T(gX)$ and $T(X) > T^{1-\alpha}(X,\G)$ becomes impossible because $T(X)$ is contained in $\{T(gX) : g\in\G \}$. I therefore I focus on the case $(1-\alpha)|\G |\leq |\G |-1$, which is equivalent to $\alpha \geq 1/2^q$. 

	(iii)~The test also has power against alternatives where $\lambda$ varies with the cluster index $j$ and at least some of the $\lambda_j$ are large. However, a precise statement without additional conditions on the relative sizes of the $\lambda_j$ is involved. I do not pursue this here to prevent notational clutter.
	\sqed
\end{remarks}

\subsection{Inference when parameters are identified across clusters} \label{ss:between}
In applications, the treatment effect is often not identified from within a cluster but by comparisons across two clusters. This is the case, for example, if treatment is assigned at random at the cluster level or if identification comes from comparing changes in one cluster to changes in another cluster in a quasi-experimental context. In this situation, each individual pairing of a treated cluster $j$ with a control cluster $k$ is generally informative about the treatment effect of interest $\delta$ and each $(j,k)$ pair gives rise to an estimate $\hat{\delta}_{j,k}$ of $\delta$ that could be used in a CRK-type test. The following example illustrates this for difference-in-differences estimation of quantile treatment effects.
\begin{example}[Quantile difference in differences]\label{ex:qdid} Let $\Delta Y_t(0) = Y_t(0) - Y_{t-1}(0)$ be time differences of untreated outcomes. Periods $t \in \{0,-1 \}$ are pre-intervention periods and $t=1$ is the post-intervention period; $Y_1(1)$ is a treated potential outcome and $Y_t$ are observed outcomes. 
Denote by $F_{Y\mid D=d}$ the distribution of a variable $Y$ conditional on the treatment indicator taking on the value $d\in\{0,1\}$. \citet{callawayli2019} show that the distribution $F_{Y_1(0)\mid D =1 }(y)$ of the untreated potential outcome of a treated observation at time $t=1$ can be identified as 
\begin{equation}\label{eq:qttcounterfactual}
	 \prob\Bigl( F^{-1}_{\Delta Y_1\mid D=0} \bigl(F_{\Delta Y_{0} \mid D = 1}(\Delta Y_{0})\bigr) + F^{-1}_{Y_{0}\mid D=0} \bigl(F_{Y_{-1} \mid D = 1}(Y_{-1})\bigr) \leq y\mid D = 1\Bigr) 
\end{equation} as long as a distributional version of the standard parallel trends assumption and some additional stability and smoothness conditions hold. This identifies the quantile treatment on the treated (QTT) effect \[u\mapsto \delta(u) =  F^{-1}_{Y_1(1)\mid D=1}(u) - F^{-1}_{Y_1(0)\mid D=1}(u),\] where $F^{-1}_{Y_1(1)\mid D=1}(u)$ can be estimated by the sample quantile $\hat{F}^{-1}_{Y_1\mid D=1}(u)$. To estimate the counterfactual quantile, \citeauthor{callawayli2019} 
replace $\prob$ and every $F$ in \eqref{eq:qttcounterfactual} with sample equivalents. This yields the estimated QTT 
\begin{equation}\label{eq:qteest}
u\mapsto \hat{F}^{-1}_{Y_1\mid D=1}(u) - \hat{F}^{-1}_{Y_1(0)\mid D=1}(u).	
\end{equation}
\citeauthor{callawayli2019} show that $\sqrt{n}(\hat{F}^{-1}_{Y_1\mid D=1} - \hat{F}^{-1}_{Y_1(0)\mid D=1} - \delta)$ converges to a well-behaved Gaussian process under mild regularity conditions.

Suppose that data come from $q_1$ states that received treatment and $q_0$ states that did not. View a single state over time as a cluster. Then two clusters are enough to compute \eqref{eq:qteest}: $\hat{F}^{-1}_{Y_1\mid D=1}$ can be computed from a treated cluster $j$ and $\hat{F}^{-1}_{Y_1(0)\mid D=1}$ can be computed from $j$ and an untreated cluster $k$. Denote by $\hat{\delta}_{j,k}$ the QTT estimated in this fashion using only data from clusters $j$ and $k$. Each $(j,k)$ pair provides a valid estimate of $\delta$ and each $\hat{\delta}_{j,k}$ could potentially be used in a CRK-type test of the null hypothesis $H_0\colon \delta = \delta_0$.
 \sqed
\end{example}

I again assume that centered and scaled $\hat{\delta}_{j,k}$ converge in distribution to non-degenerate Gaussian processes with smooth sample paths as in Assumption~\ref{as:weakc}. I only adjust this condition for the fact that estimates are constructed from pairwise combination of clusters. Let $q_1$ be the number of treated clusters and let $q_0$ be the number of control clusters.
\begin{assumption}\label{as:weakc2} The process $\{\sqrt{n}(\hat{\delta}_{j,k} - \delta)(u) : u\in \mathcal{U}\}$ converges, jointly in $j$ and $k$, in distribution to mean-zero Gaussian processes $X_{j,k}$ with continuous sample paths that satisfy $\prob(X_{j,k}(u) = -X_{j,k}(u'))=0$ for all $u,u'\in \mathcal{U}$, $1\leq j\leq q_1$, and $1\leq k\leq q_0$. If both $j\neq j'$ and $k\neq k'$, then $X_{j,k}$ and $X_{j',k'}$ are independent.
\end{assumption}

A na\"ive test of $H_0\colon \delta \equiv \delta_0$ would now take $X_{n,j,k} := \sqrt{n}(\hat{\delta}_{j,k} - \delta_{0})$ and generate randomization distributions from $\{X_{n,j,k} : 1\leq j\leq q_1, 1\leq k\leq q_0\}$ via sign changes. However, $X_{n,j,k}$ and $X_{n,j,k'}$ are dependent for any choice of $j,k,k'$ because $j$ is used twice. This remains true even in large samples and if the data from all $q_1+q_0$ groups are independent. Dependence causes problems because $(X_{n,j,k}, X_{n,j,k'})$ and $(X_{n,j,k}, -X_{n,j,k'})$ generally do not have the same joint distribution even when $n\to\infty$. Invariance under transformations with $g$ therefore fails. This issue can be avoided if one works with a subset of $\{X_{n,j,k} : 1\leq j\leq q_1, 1\leq k\leq q_0\}$ that uses each $j$ and $k$ only once. While this solves the dependence issue, it introduces another problem: each of the $q_1$ treatment groups now has to be paired with exactly one of the $q_0$ control groups. Unless these pairings are determined before the data are analyzed, two researchers working with the same data and methodology could arrive at different conclusions because they chose different pairings. To address this problem, I now develop a method that maintains invariance under sign changes but avoids any decisions on the part of the researcher.

I first introduce some notation. If $q_1\leq q_0$, there are $q_0 \times (q_0-1) \times \dots \times (q_0-q_1+1)$ ways of choosing $q_1$ ordered elements out of $(1,\dots, q_0)$. Identify each such choice with an $h$ and denote the collection of all $h$ by $\mathcal{H}$. The ordering within $\mathcal{H}$ will not affect the test decision. For each $h\in\mathcal{H}$, denote by
\begin{equation} \label{eq:hq1leqq0}
	\hat{\delta}_{[h]} = (\hat{\delta}_{1,h(1)}, \hat{\delta}_{2,h(2)}, \dots, \hat{\delta}_{q_1,h(q_1)}), \qquad q_1\leq q_0,
\end{equation}
the vector that matches the subset of control groups associated with the label $h = (h(1),\dots, h(q_1))$ to the (unpermuted) treated groups. If there are more treated than control groups such that $q_1 > q_0$, permute treated groups instead and take $h$ as enumerating ways of choosing $q_0$ elements out of $(1,\dots, q_1)$ to define 
\begin{equation} \label{eq:hq1geqq0}
\hat{\delta}_{[h]} = (\hat{\delta}_{h(1), 1}, \hat{\delta}_{h(2), 2}, \dots, \hat{\delta}_{h(q_0), q_0}), \qquad q_1 > q_0.
\end{equation}
By construction, the entries of $\hat{\delta}_{[h]}$ are independent of one another but $\hat{\delta}_{[h]}$ and $\hat{\delta}_{[h']}$ for $h,h'\in\mathcal{H}$ are potentially highly dependent.

To address the issue that there are multiple ways of combining clusters, I use an adjustment based on the randomization \emph{p}-value 
\begin{equation}\label{eq:pval}
p(X, \G) =  \inf\{ p \in (0,1) : T(X) >  T^{p}(X,\G) \} = \frac{1}{|\G|}\sum_{g\in\G}1\{ T(gX) \geq T(X) \}.
\end{equation}
Testing with this $p$-value is equivalent to a test with a critical value because $T(X) >  T^{1-\alpha}(X,\G)$ if and only if $p(X, \G) \leq \alpha$. The multiple comparisons adjustment is based on an inequality of \citet{ruschendorf1982}. It states that arbitrary, possibly dependent variables $U_h$ indexed by $h\in\mathcal{H}$ with the property that $\prob(U_h\leq u)\leq u$ for every $u\in [0,1]$ satisfy
\begin{equation}\label{eq:rueschendorf}
	\prob\Biggl(\frac{2}{|\mathcal{H}|}\sum_{h\in\mathcal{H}} U_h \leq u\Biggr) \leq u, \qquad \text{every $u\in [0,1]$}.
\end{equation}
This specific form of the inequality is given in \citet{vovkwang2020}. Here the indexing set $\mathcal{H}$ is arbitrary and does not need to be related to permutations. The only condition is that $H = |\mathcal{H}| \geq 2$. The randomization $p$-value $p(\hat{\delta}_{[h]} - \delta_0 1_{q_1  \wedge q_0}, \G)$ for testing whether the treatment effect of interest equals $\delta_0$ can be expected to behave like the $U_h$ in \eqref{eq:rueschendorf} in a large enough sample. Combining $p$-values of the CRK test to reject the null if 
\begin{equation}\label{eq:avgpval}
	\frac{2}{H}\sum_{h\in\mathcal{H}} p(\hat{\delta}_{[h]} - \delta_0 1_{q_1  \wedge q_0}, \G)	
\end{equation}
does not exceed $\alpha$ should then asymptotically control size. The following theorem confirms that this is indeed true. 
\begin{theorem}[Size with combined \emph{p}-values]\label{th:size2}
Suppose Assumption \ref{as:weakc2} holds. If $\delta = \delta_0$, then \[\limsup_{n\to\infty} \prob\Biggl( \frac{2}{H}\sum_{h\in\mathcal{H}} p(\hat{\delta}_{[h]} - \delta_0 1_{q_1  \wedge q_0}, \G) \leq \alpha\Biggr)\leq \alpha	.\]
\end{theorem}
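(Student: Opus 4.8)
The plan is to show that, under the null, the averaged randomization $p$-value in the theorem converges in distribution to an average of superuniform random variables to which the R\"uschendorf inequality \eqref{eq:rueschendorf} applies, and then to conclude with the portmanteau lemma. Write $\Psi_n := \frac{2}{H}\sum_{h\in\mathcal{H}} p(\hat{\delta}_{[h]} - \delta_0 1_{q_1\wedge q_0}, \G)$ and work under $\delta = \delta_0$. Because $T(cx) = cT(x)$ for $c > 0$, the $p$-value in \eqref{eq:pval} is invariant to positive scaling, so $\Psi_n = \frac{2}{H}\sum_{h\in\mathcal{H}} p(X_{n,[h]},\G)$ with $X_{n,[h]} := \sqrt{n}(\hat{\delta}_{[h]} - \delta 1_{q_1\wedge q_0})$. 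In the construction \eqref{eq:hq1leqq0}--\eqref{eq:hq1geqq0} each treated index and each control index appears at most once in $\hat{\delta}_{[h]}$, so any two distinct entries of $\hat{\delta}_{[h]}$ differ in both a treated and a control coordinate; since $(X_{n,[h]})_{h\in\mathcal{H}}$ is a coordinatewise-projection image of the full array $\{\sqrt{n}(\hat{\delta}_{j,k} - \delta)\}_{j,k}$, Assumption \ref{as:weakc2} yields $(X_{n,[h]})_{h\in\mathcal{H}} \leadsto (X_{[h]})_{h\in\mathcal{H}}$, where each $X_{[h]}$ has independent mean-zero Gaussian components with continuous sample paths satisfying $\prob(X_{j,k}(u) = -X_{j,k}(u')) = 0$ for all $u,u'$, and hence $\prob(X_{j,k}(u) = 0) = 0$. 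In particular, each $X_{[h]}$ meets the hypotheses of Theorems \ref{th:hoeffding} and \ref{th:asysize} with $q$ replaced by $q_1\wedge q_0$.

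Next I would push the average of the $p$-values through the continuous mapping theorem. Each $x\mapsto T(gx)$ is $1$-Lipschitz on $\ell^\infty(\mathcal{U})^{q_1\wedge q_0}$, so $x\mapsto p(x,\G)$ is continuous at every $x$ for which the values $\{T(gx) : g\in\G\}$ are pairwise distinct. The non-degeneracy condition in Assumption \ref{as:weakc2} rules out ties among these values with probability one --- the same mechanism that drives the proof of Theorem \ref{th:asysize} --- so $x\mapsto p(x,\G)$ is almost surely continuous at $X_{[h]}$ for each $h$, and since $\mathcal{H}$ is finite the map $(x_{[h]})_{h\in\mathcal{H}} \mapsto \frac{2}{H}\sum_{h\in\mathcal{H}} p(x_{[h]},\G)$ is almost surely continuous at $(X_{[h]})_{h\in\mathcal{H}}$. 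Hence $\Psi_n \leadsto S := \frac{2}{H}\sum_{h\in\mathcal{H}} p(X_{[h]},\G)$, and the portmanteau lemma applied to the closed half-line $(-\infty,\alpha]$ gives $\limsup_{n\to\infty} \prob(\Psi_n \leq \alpha) \leq \prob(S\leq\alpha)$.

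I would then verify that $\prob(S\leq\alpha)\leq\alpha$. For fixed $h\in\mathcal{H}$ put $U_h := p(X_{[h]},\G)$. Since $T(x) > T^{1-u}(x,\G)$ if and only if $p(x,\G)\leq u$, one has $\{U_h \leq u\} = \{\varphi_u(X_{[h]},\G) = 1\}$, so $\prob(U_h\leq u) = \ev\varphi_u(X_{[h]},\G) \leq u$ for every $u\in(0,1)$ by Theorem \ref{th:hoeffding}, while the cases $u\in\{0,1\}$ are immediate because $p(X_{[h]},\G)\geq 1/|\G| > 0$. Each $U_h$ is therefore superuniform, so R\"uschendorf's inequality \eqref{eq:rueschendorf} --- valid because $H\geq 2$ --- gives $\prob(S\leq\alpha) = \prob\bigl(\tfrac{2}{H}\sum_{h\in\mathcal{H}} U_h \leq \alpha\bigr)\leq\alpha$. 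Combined with the previous display, this proves the theorem.

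The step I expect to be the main obstacle is the weak convergence $\Psi_n\leadsto S$, that is, the almost-sure continuity of the randomization $p$-value functional $x\mapsto p(x,\G)$ at the limit process: without it the boundary of $\{S\leq\alpha\}$ could carry mass and the portmanteau argument would break down. This is exactly where the non-standard assumption $\prob(X_{j,k}(u) = -X_{j,k}(u')) = 0$ is used, in the same way as in Theorem \ref{th:asysize}. The remaining ingredients --- that $X_{[h]}$ inherits independent components from the sampling-without-repetition construction, scale invariance of $p(\cdot,\G)$, superuniformity of $U_h$ via Theorem \ref{th:hoeffding}, and the R\"uschendorf bound itself --- are routine.
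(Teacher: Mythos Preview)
Your proof is correct and follows the same core strategy as the paper: scale invariance reduces to $X_{n,[h]}$, the no-ties property inherited from Theorem~\ref{th:asysize} gives almost-sure continuity of the $p$-value functional so that the continuous mapping theorem pushes through, and R\"uschendorf's inequality combined with superuniformity of each $p(X_{[h]},\G)$ (via Theorem~\ref{th:hoeffding}) bounds the limiting probability.

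The one genuine difference is how you pass from weak convergence of $\Psi_n$ to the $\limsup$ bound. The paper splits into two cases according to whether $\alpha H|\G|/2$ is an integer: at non-integer values it has exact convergence $\prob(\Psi_n\leq\alpha)\to\prob(S\leq\alpha)$, while at integer values it perturbs to $\alpha+\varepsilon$, applies R\"uschendorf there, and sends $\varepsilon\searrow 0$. You instead invoke the closed-set form of the portmanteau lemma on $(-\infty,\alpha]$ to obtain $\limsup_n\prob(\Psi_n\leq\alpha)\leq\prob(S\leq\alpha)$ in one stroke, bypassing the case distinction entirely. This is a cleaner route to the stated conclusion; the paper's argument buys a little more, namely the identification of the exact limit at continuity points mentioned in Remark~(i) after the theorem, but for the theorem as stated your shortcut is sufficient and more direct.
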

\begin{remarks}
(i)~The theorem can be improved slightly if $\alpha |\G| H/2$ is not an integer. In that case, the limit superior in the theorem is a proper limit that equals $\prob( (2/H)\sum_{h\in\mathcal{H}} p(X_{[h]},\G) \leq \alpha ),$ where $X_{[h]}$ is the weak limit of $\sqrt{n}(\hat{\delta}_{[h]} - \delta_0 1_{q_1  \wedge q_0})$. 
This is because the sum in the preceding display can vary discontinuously at certain values. The limit inferior is $\prob( (2/H)\sum_{h\in\mathcal{H}} p(X_{[h]},\G) < \alpha ).$

(ii)~Results of \citet{vovkwang2020} suggest that other ways of combining $p$-values such as $\exp(1)$ times the geometric mean of the $p$-value instead of a twice the average $p$-value are likely to be applicable here as well. However, the proof of the theorem given here relies crucially on the properties of the \citeauthor{ruschendorf1982} inequality. In the Monte Carlo experiments in the next section, I do not find evidence that other ways of combining $p$-values lead to better results. \sqed
\end{remarks}

The price paid for not matching treated and control clusters before the analysis is lower relative power. When $p$-values are averaged, \citeauthor{ruschendorf1982}'s inequality essentially decreases $\alpha$ to $\alpha/2$ to control size. \citet{meng1993} shows that the constant $2$ cannot be improved. Still, as I establish below, the test has power against global and local alternatives if $\alpha > 1/2^{q_1  \wedge q_0-1}$, which is slightly stronger than what is needed in Theorem \ref{th:power}. Compared to Theorem~\ref{th:power}, I also do not state an explicit bound for the local power analysis because applying the Borel-TIS inequality to the averaged $p$-values directly yields only relatively crude results. I instead show that if the alternatives $\lambda/\sqrt{n}$ converging to the null hypothesis are scaled up by a constant $c$, the test can detect these alternatives in the limit experiment with arbitrary accuracy if $c$ is large enough, that is, if first $n\to\infty$ and then $c\to\infty$.
\begin{theorem}[Global and local power with combined \emph{p}-values]\label{th:power2}
Suppose Assumption~\ref{as:weakc2} holds and $\alpha > 1/2^{q_1  \wedge q_0-1} $. If $H_1\colon \delta = \delta_0 + \lambda$ with $\lambda \colon \mathcal{U} \to [0,\infty)$ continuous and $\sup_{u\in\mathcal{U}}(u) > 0$, then $\lim_{n\to\infty} \prob( (2/H)\sum_{h\in\mathcal{H}} p(\hat{\delta}_{[h]} - \delta_0 1_{q_1  \wedge q_0}, \G) \leq \alpha) = 1$. If $H_1\colon \delta = \delta_0 + c \lambda/\sqrt{n}$, then \[ \lim_{c\to\infty} \liminf_{n\to\infty } \prob\Biggl( \frac{2}{H}\sum_{h\in\mathcal{H}} p(\hat{\delta}_{[h]} - \delta_0 1_{q_1  \wedge q_0}, \G) \leq \alpha\Biggr) = 1. \] 
\end{theorem}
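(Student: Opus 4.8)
The plan is to show that under both fixed and (scaled) local alternatives the averaged randomization $p$-value is eventually driven to its smallest attainable value $2^{-(q_1\wedge q_0)}$, so that $(2/H)\sum_{h\in\mathcal H}p(\hat\delta_{[h]}-\delta_0 1_{q_1\wedge q_0},\G)$ converges (in probability for global alternatives, and in the limit experiment as $c\to\infty$ for local ones) to $2^{-(q_1\wedge q_0-1)}$, which lies strictly below $\alpha$ by hypothesis. Write $q:=q_1\wedge q_0$ throughout. Two structural facts are used repeatedly: the $p$-value is scale invariant, $p(cx,\G)=p(x,\G)$ for $c>0$ because $T(cx)=cT(x)$; and $x\mapsto\sum_{h}p(x_{[h]},\G)$ is upper semicontinuous on $(\ell^\infty(\mathcal U)^q)^{|\mathcal H|}$, since $T$ is Lipschitz for the uniform norm and hence each $x\mapsto 1\{T(gx)\ge T(x)\}$ is the indicator of a closed set. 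The first fact lets me write $p(\hat\delta_{[h]}-\delta_0 1_q,\G)=p(X_{n,[h]}+c\lambda 1_q,\G)$ with $X_{n,[h]}:=\sqrt n(\hat\delta_{[h]}-\delta 1_q)$, where under $H_1\colon\delta=\delta_0+c\lambda/\sqrt n$ one has $c$ fixed and under $H_1\colon\delta=\delta_0+\lambda$ one has $c=\sqrt n$; by Assumption \ref{as:weakc2} (with the convergence maintained along the local sequence, exactly as assumed in Theorem \ref{th:power}) $X_{n,[h]}\leadsto X_{[h]}$ jointly over $h\in\mathcal H$, the $X_{[h]}$ inheriting Gaussianity and the non-degeneracy condition.

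The core estimate is that the drift $c\lambda 1_q$ separates $T(x)$ from $T(gx)$ for every $g\in\G$ with $g\ne 1_q$. Let $\lambda^*:=\sup_{u\in\mathcal U}\lambda(u)>0$, attained at some $u^*$ by compactness of $\mathcal U$. For any $z\in\ell^\infty(\mathcal U)^q$, $T(z+c\lambda 1_q)\ge\frac1q\sum_{j}z_j(u^*)+c\lambda^*$, whereas for $g\ne 1_q$ one has $\sum_j g_j\le q-2$, so, using $\lambda\ge0$, the drift contribution to $T(g(z+c\lambda 1_q))$ is at most $c\lambda^*(q-2)^+/q$ and hence $T(g(z+c\lambda 1_q))\le\sup_u\frac1q\sum_j g_jz_j(u)+c\lambda^*(q-2)^+/q$. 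Subtracting, $T(z+c\lambda 1_q)-T(g(z+c\lambda 1_q))$ is bounded below by a constant not depending on $c$ plus $c\lambda^*\bigl(q-(q-2)^+\bigr)/q$, and $q-(q-2)^+\ge\min\{q,2\}>0$. Taking $z=X_{n,[h]}$, each such difference diverges to $+\infty$ in probability because $\|X_{n,[h]}\|_\infty=\Op(1)$; a union bound over the finitely many pairs $(h,g)$ shows that with probability tending to one $T(gx)<T(x)$ for all of them, and on that event $p(X_{n,[h]}+c\lambda 1_q,\G)=1/|\G|=2^{-q}$ for every $h$, so $(2/H)\sum_h p=2/|\G|=2^{-(q-1)}<\alpha$ (here the hypothesis $\alpha>1/2^{q-1}$ is exactly what is needed). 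Taking $c=\sqrt n$ gives the global power conclusion immediately.

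For the local statement, fix $c$ first. Upper semicontinuity makes $\{(2/H)\sum_h p(x_{[h]},\G)<\alpha\}$ open, so the portmanteau lemma applied to $X_{n,[\cdot]}+c\lambda 1_q\leadsto X_{[\cdot]}+c\lambda 1_q$ yields $\liminf_n\prob\bigl((2/H)\sum_h p(\hat\delta_{[h]}-\delta_0 1_q,\G)\le\alpha\bigr)\ge\prob\bigl((2/H)\sum_h p(X_{[h]}+c\lambda 1_q,\G)<\alpha\bigr)$. Now let $c\to\infty$: applying the core estimate pathwise with $z=X_{[h]}(\cdot,\omega)$ shows that for a.e.\ $\omega$ there is a finite $c_0(\omega)$ such that $(2/H)\sum_h p(X_{[h]}+c\lambda 1_q,\G)=2^{-(q-1)}<\alpha$ for all $c\ge c_0(\omega)$, so bounded convergence gives $\prob\bigl((2/H)\sum_h p(X_{[h]}+c\lambda 1_q,\G)<\alpha\bigr)\to1$. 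Combining the last two displays and using that probabilities are bounded by one proves $\lim_{c\to\infty}\liminf_n\prob\bigl((2/H)\sum_h p(\hat\delta_{[h]}-\delta_0 1_q,\G)\le\alpha\bigr)=1$. The genuinely delicate parts are the bookkeeping in the $T(gx)$-versus-$T(x)$ comparison — in particular the edge cases $q\le2$, where $(q-2)^+=0$ — and making sure the weak-convergence input of Assumption \ref{as:weakc2} is available along the local sequence; the upper-semicontinuity observation is what lets me avoid proving almost-sure continuity of the $p$-value functional, which would otherwise be the main obstacle (and is where the proof of Theorem \ref{th:size2} has to work harder).
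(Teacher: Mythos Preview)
Your proof is correct and reaches the same endpoint as the paper---showing that the averaged $p$-value is driven to $2/|\G|=2^{-(q-1)}<\alpha$---but the route differs in a meaningful way in the local part. The paper handles the passage $n\to\infty$ by invoking the ``no ties'' machinery from Theorems~\ref{th:asysize} and~\ref{th:size2}: it shows $(T(g(X_{[h]}+c\lambda 1_q)))_{g\in\G}$ has no ties almost surely, identifies the continuity points of the limiting distribution of the averaged $p$-value, and then approximates integer values of $\alpha H|\G|/2$ from below by non-integers to squeeze out the strict inequality. For $c\to\infty$, the paper normalizes by $T(c\lambda 1_q)$ and uses subadditivity of suprema to obtain $T(g(X_{[h]}+c\lambda 1_q))/T(c\lambda 1_q)\to T(g\lambda 1_q)/T(\lambda 1_q)$ almost surely. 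You instead bypass the no-ties argument entirely by observing that $x\mapsto\sum_h p(x_{[h]},\G)$ is upper semicontinuous, so portmanteau for open sets delivers the lower bound $\liminf_n\prob(\,\cdot\,\le\alpha)\ge\prob(\,\cdot<\alpha)$ directly; and your ``core estimate'' gives a single quantitative gap $T(z+c\lambda 1_q)-T(g(z+c\lambda 1_q))\ge -2\|z\|_\infty+c\lambda^*\min(q,2)/q$ that handles both the global case ($c=\sqrt n$, $z=X_{n,[h]}=O_P(1)$) and the pathwise $c\to\infty$ argument uniformly. Your approach is more self-contained and avoids recycling the Gaussian non-degeneracy analysis; the paper's approach yields a slightly sharper intermediate statement (an actual limit at continuity points rather than a one-sided bound) and stays consistent with the proof architecture of the surrounding theorems.
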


\subsection{Implementation} I now turn to some practical aspects of the CRK test. I discuss (i)~what to do if $\G$ is large, (ii)~what to do if $\mathcal{H}$ is large, and (iii)~how to implement the test with a step-by-step guide. 
First, $\G$ can be prohibitively large if the number of clusters is large. If computing the entire randomization distribution is too costly, then $\G$ can be approximated by a random sample $\G_m$ consisting of $m$ draws from $\G$ with replacement. This is often referred to as ``stochastic approximation.'' The theorems presented in Sections \ref{ss:within} and \ref{ss:between} continue to hold if $\G_m$ is used in place of $\G$ as long as a limit superior or inferior as $m\to \infty$ is applied before $n\to\infty$. The order of limits is not restrictive because, in a given sample of size $n$, the number of draws can $m$ always be made as large as computationally feasible. Under stochastic approximation, the statement in Theorem~\ref{th:size} becomes $\lim_{n\to\infty}\limsup_{m\to\infty} \ev \varphi_{\alpha}(\hat{\delta} - \delta_0 1_q, \G_m ) \leq \alpha$, whereas statements about power use a limit inferior. Limit superior and inferior are needed here because of potential discontinuities but can be replaced by regular limits for most values of $\alpha$.  Theorems \ref{th:size}, \ref{th:size2}, and \ref{th:power2} hold without additional conditions but the conditions of Theorem~\ref{th:power} have to be strengthened marginally to avoid a discontinuity at $\alpha = 1/2^q$. 
\begin{proposition}\label{pr:stochastic}
Suppose $\G_m$ consists of $m$ iid draws from $\G$. If every instance of $\G$ is replaced by $\G_m$, then
\begin{enumerate}[\upshape (i)]
	\item Theorem~\ref{th:size} holds if $\lim_{n\to\infty}$ is replaced by $\lim_{n\to\infty}\limsup_{m\to\infty}$,
	\item Theorem~\ref{th:power} holds if every $\lim_{n\to\infty}$ is replaced by $\lim_{n\to\infty}\liminf_{m\to\infty}$ and $\alpha > 1/2^q$,
	\item Theorem~\ref{th:size2} holds if $\limsup_{n\to\infty}$ is replaced by $\limsup_{n\to\infty}\limsup_{m\to\infty}$,
	\item Theorem~\ref{th:power2} holds if $\lim_{n\to\infty}$ is replaced by $\lim_{n\to\infty}\liminf_{m\to\infty}$ and $\liminf_{n\to\infty}$ is replaced by $\liminf_{n\to\infty}\liminf_{m\to\infty}$.
\end{enumerate}
If $\alpha \not\in \{ j/|G| : 1\leq j\leq |G| \}$, then $\liminf_{m\to\infty}$ and $\limsup_{m\to\infty}$ can be replaced by $\lim_{m\to\infty}$ in {\upshape (i)-(iv)}.
\end{proposition}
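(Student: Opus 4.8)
The plan is to let $m\to\infty$ first, with $n$ held fixed, and then to transfer the conclusions of Theorems~\ref{th:size}--\ref{th:power2} by a law-of-large-numbers argument over the random draws, exploiting that $\G_m$ is generated independently of the data. The starting point is that for any $x$ and any size-$m$ multiset $\G_m = \{g_1,\dots,g_m\}$ one still has $\varphi_{\alpha}(x,\G_m) = 1\{p(x,\G_m)\leq\alpha\}$ with $p$ as in \eqref{eq:pval}, because $p(x,\G_m)\leq\alpha$ is the same as $\#\{i : T(g_i x) < T(x)\}\geq\lceil(1-\alpha)m\rceil$. Conditionally on the data, $p(\hat{\delta}-\delta_0 1_q,\G_m) = m^{-1}\sum_{i=1}^m 1\{T(g_i(\hat{\delta}-\delta_0 1_q))\geq T(\hat{\delta}-\delta_0 1_q)\}$ is an average of i.i.d.\ $\{0,1\}$-valued variables with mean $p(\hat{\delta}-\delta_0 1_q,\G)$, so the strong law gives $p(\hat{\delta}-\delta_0 1_q,\G_m)\to p(\hat{\delta}-\delta_0 1_q,\G)$ a.s., and since the draws are independent of $\hat{\delta}$ this holds a.s.\ jointly for every fixed $n$ by Fubini. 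The same is true, simultaneously over the finite index set $\mathcal{H}$, for the vectors $\hat{\delta}_{[h]}-\delta_0 1_{q_1 \wedge q_0}$, and hence for the combined object $\frac{2}{H}\sum_{h\in\mathcal{H}} p(\hat{\delta}_{[h]}-\delta_0 1_{q_1 \wedge q_0},\G_m)$ from \eqref{eq:avgpval}.

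For the size parts~(i) and~(iii) I would apply the reverse Fatou lemma. On the event $\{p(\hat{\delta}-\delta_0 1_q,\G) > \alpha\}$ the convergence above forces $p(\hat{\delta}-\delta_0 1_q,\G_m) > \alpha$ for all large $m$, so $\limsup_m 1\{p(\hat{\delta}-\delta_0 1_q,\G_m)\leq\alpha\}\leq 1\{p(\hat{\delta}-\delta_0 1_q,\G)\leq\alpha\}$ a.s.; the indicators are bounded, so reverse Fatou gives $\limsup_m \ev\varphi_{\alpha}(\hat{\delta}-\delta_0 1_q,\G_m)\leq\ev\varphi_{\alpha}(\hat{\delta}-\delta_0 1_q,\G)$ for every $n$, and taking $\limsup_n$ and invoking Theorem~\ref{th:size} yields~(i). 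Part~(iii) is identical with $\frac{2}{H}\sum_{h} p(\hat{\delta}_{[h]}-\delta_0 1_{q_1 \wedge q_0},\cdot)$ in place of $p(\hat{\delta}-\delta_0 1_q,\cdot)$ and Theorem~\ref{th:size2} in place of Theorem~\ref{th:size}.

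For the power parts~(ii) and~(iv) I would use the ordinary Fatou lemma: a.s.\ convergence gives $\liminf_m 1\{p(\hat{\delta}-\delta_0 1_q,\G_m)\leq\alpha\}\geq 1\{p(\hat{\delta}-\delta_0 1_q,\G) < \alpha\}$, hence $\liminf_m \ev\varphi_{\alpha}(\hat{\delta}-\delta_0 1_q,\G_m)\geq\prob(p(\hat{\delta}-\delta_0 1_q,\G) < \alpha)$. Under a fixed alternative the arguments behind Theorem~\ref{th:power} show that, with probability tending to one, $T(\hat{\delta}-\delta_0 1_q)$ is the strict maximum of $\{T(g(\hat{\delta}-\delta_0 1_q)) : g\in\G\}$ (attained only at the identity), so $p(\hat{\delta}-\delta_0 1_q,\G) = 1/2^{q}$ with probability tending to one; under a local alternative the same arguments bound, in the limit, the probability of this ``identity is the strict maximum'' event from below by the Borell--TIS expression in Theorem~\ref{th:power}. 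Since the strengthened hypothesis $\alpha > 1/2^{q}$ makes $1/2^{q} < \alpha$, that event is contained in $\{p(\hat{\delta}-\delta_0 1_q,\G) < \alpha\}$, so the displayed lower bound on $\liminf_m \ev\varphi_{\alpha}$ inherits exactly the bound of Theorem~\ref{th:power}; taking $\lim_n$ gives~(ii). Part~(iv) is the same with the combined $p$-value and $1/2^{q_1 \wedge q_0-1}$ in place of $1/2^{q}$, using Theorem~\ref{th:power2}; no further hypothesis is needed because Theorem~\ref{th:power2} already assumes $\alpha > 1/2^{q_1 \wedge q_0-1}$.

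Finally, if $\alpha$ is not one of the (finitely many) possible values of the limiting $\G$-based statistic---in parts~(i)--(ii) this means $\alpha\notin\{j/|\G| : 1\leq j\leq|\G|\}$, since $p(\hat{\delta}-\delta_0 1_q,\G)$ lives on the grid $\{0,1/|\G|,\dots,1\}$, and analogously for the combined $p$-value in~(iii)--(iv)---then $1\{\,\cdot\leq\alpha\}$ is a.s.\ continuous at the limit, so $1\{p(\hat{\delta}-\delta_0 1_q,\G_m)\leq\alpha\}\to 1\{p(\hat{\delta}-\delta_0 1_q,\G)\leq\alpha\}$ a.s.\ and bounded convergence replaces both $\limsup_m$ and $\liminf_m$ by $\lim_m$ throughout. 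The step I expect to require the most care is exactly this discreteness phenomenon: the $\G_m$-based critical value genuinely oscillates when $\alpha$ lies on the relevant grid, so one can only push the indicator through from one side (Fatou \emph{or} reverse Fatou, not both); in particular at $\alpha = 1/2^{q}$ the limiting $\G$-based $p$-value equals $\alpha$ on precisely the event that drives the power of the test, which is why the power statement must be strengthened to $\alpha > 1/2^{q}$. The remaining ingredients---the strong law over the i.i.d.\ draws, the Fubini step, and the appeals to Theorems~\ref{th:size}--\ref{th:power2}---are routine.
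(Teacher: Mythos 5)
Your proposal is correct and follows essentially the same route as the paper: both hinge on recognizing $p(\cdot,\G_m)$ as a conditional sample average converging to $p(\cdot,\G)$ by the law of large numbers (for fixed $n$), exploiting that $p(\cdot,\G)$ is supported on a finite grid to push the indicator through the $m$-limit from one side only, and then invoking Theorems~\ref{th:size}--\ref{th:power2}. The only differences are cosmetic---the paper uses convergence in distribution plus a sandwich with $\alpha\pm\varepsilon$ where you use almost-sure convergence plus (reverse) Fatou, and your remark that the relevant grid in parts (iii)--(iv) is that of the combined $p$-value is, if anything, slightly more careful than the proposition's own wording.
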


Second, the number of elements of $\mathcal{H}$ can similarly be large if the number of clusters is large or if there is a large discrepancy between the number of treated and the number of control clusters. In that case one can again work with a random subset $\mathcal{I}$ of $\mathcal{H}$. The crucial difference to the preceding result is that both Theorems~\ref{th:size2} and \ref{th:power2} continue to hold even if $\mathcal{I}$ consists of only a finite number of random draws. In fact, the result goes through for any $\mathcal{I}$ as long as $\mathcal{I}$ is independent of the data.
\begin{proposition}\label{pr:randomrueschendorf}
	Let $\mathcal{I}$ with $|\mathcal{I}|\geq 2$ be a fixed or random subset of $\mathcal{H}$ independent of the data. Then Theorems \ref{th:size2} and \ref{th:power2} continue to hold if $\mathcal{H}$ is replaced by $\mathcal{I}$.
\end{proposition}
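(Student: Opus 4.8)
The plan is to isolate the one structural feature of the construction in Section~\ref{ss:between} that actually matters here: as noted below \eqref{eq:rueschendorf}, the \citet{ruschendorf1982} inequality and its symmetrized form from \citet{vovkwang2020} impose nothing on the index set beyond its having at least two elements. Everything else that enters the proofs of Theorems~\ref{th:size2} and~\ref{th:power2} --- weak convergence of $\sqrt{n}(\hat{\delta}_{[h]}-\delta_0 1_{q_1\wedge q_0})$ to the Gaussian limit $X_{[h]}$ from Assumption~\ref{as:weakc2}, the asymptotic domination $\limsup_n\prob(p(\hat{\delta}_{[h]}-\delta_0 1_{q_1\wedge q_0},\G)\leq u)\leq u$ used for size, and the (iterated-limit) convergence of $p(\hat{\delta}_{[h]}-\delta_0 1_{q_1\wedge q_0},\G)$ to a constant strictly below $\alpha/2$ under the alternatives used for power --- is indexed by a single $h$ and is therefore inherited by any subcollection. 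Hence for a \emph{nonrandom} subset $\mathcal{I}\subseteq\mathcal{H}$ with $|\mathcal{I}|\geq 2$ the proofs of both theorems go through verbatim with $\mathcal{H}$ replaced by $\mathcal{I}$ throughout, which already settles the deterministic case.

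For random $\mathcal{I}$ independent of the data I would condition on its value. Since $\mathcal{H}$ is finite, $\mathcal{I}$ ranges over finitely many subsets $I$, say with weights $w_I=\prob(\mathcal{I}=I)$, and independence makes the conditional law of the average $p$-value $(2/|\mathcal{I}|)\sum_{h\in\mathcal{I}}p(\hat{\delta}_{[h]}-\delta_0 1_{q_1\wedge q_0},\G)$ given $\{\mathcal{I}=I\}$ coincide with the unconditional law of the corresponding fixed-$I$ statistic. Thus the rejection probability is the finite mixture $\sum_I w_I\,r_{n,I}$, where $r_{n,I}$ is the rejection probability of the nonrandom $\mathcal{I}=I$ test. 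For size, the deterministic case gives $\limsup_n r_{n,I}\leq\alpha$ for every $I$, and subadditivity of $\limsup$ over the finite sum yields $\limsup_n\sum_I w_I r_{n,I}\leq\sum_I w_I\alpha=\alpha$, proving Theorem~\ref{th:size2} with $\mathcal{I}$. The global-power claim is identical with $r_{n,I}\to 1$. For the local-power claim with its iterated limit, I would use superadditivity of $\liminf$ to get $\liminf_n\sum_I w_I r_{n,I}\geq\sum_I w_I\liminf_n r_{n,I}$, then send $c\to\infty$ and pass the limit through the finite sum; since the deterministic version of Theorem~\ref{th:power2} forces $\lim_c\liminf_n r_{n,I}=1$ for each $I$, the lower bound tends to $\sum_I w_I=1$, and as the probability never exceeds $1$ the iterated limit equals $1$.

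I expect the only real subtlety to be bookkeeping rather than a genuine obstacle. Two points deserve care. First, one must check that the facts quoted for fixed $\mathcal{I}$ are the per-$h$ statements actually established in the proofs of Theorems~\ref{th:size2} and~\ref{th:power2}, not some property that secretly relied on $\mathcal{H}$ being the full set of pairings; inspection of those proofs shows they do not. Second, independence of $\mathcal{I}$ from the data is used in an essential way at the conditioning step --- without it the rejection probability need not be a mixture of the fixed-$\mathcal{I}$ probabilities, and if, say, $\mathcal{I}$ were chosen so as to make the combined $p$-value small, size control would fail. Finiteness of $\mathcal{H}$ removes any need for uniformity in $\mathcal{I}$, so all interchanges of limits with the average over the realizations of $\mathcal{I}$ are automatic.
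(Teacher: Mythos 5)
Your proposal is correct and follows essentially the same route as the paper: condition on the realization of $\mathcal{I}$, invoke the fixed-subset versions of Theorems \ref{th:size2} and \ref{th:power2} (whose proofs indeed carry over verbatim since the R\"uschendorf--Vovk--Wang inequality only requires $|\mathcal{I}|\geq 2$), and then average over $\mathcal{I}$ using independence. The only cosmetic difference is that the paper interchanges the $\limsup$/$\liminf$ with the expectation over $\mathcal{I}$ via the (reverse) Fatou lemma, while you exploit finiteness of $\mathcal{H}$ to write the rejection probability as a finite mixture and use sub/superadditivity of $\limsup$/$\liminf$ over finite sums; for a finite index set these are the same argument.
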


Finally, the following two algorithms outline and summarize how to apply the CRK test in practice. By Theorems \ref{th:size} and \ref{th:size2}, the procedures provide an asymptotically $\alpha$-level test in the presence of a finite number of large clusters that are arbitrarily heterogeneous. They are free of nuisance parameters and do not require any decisions on the part of the researcher. By Theorems \ref{th:power} and \ref{th:power2}, the tests are able to detect fixed and $1/\sqrt{n}$-local alternatives. The first algorithm describes the CRK test when the parameters are identified within clusters. The second algorithm describes the between-cluster case, which is needed for distributional difference in differences. The tests can be two-sided or one-sided in either direction. 

\begin{algorithm}[CRK test for parameters identified within clusters]\label{al:within}\item[]
\begin{enumerate}
	\item Compute for each $j = 1,\dots, q$ and using only data from cluster $j$ an estimate $\hat{\delta}_{j}$ of a parameter of interest $\delta$. (See Examples \ref{ex:qr} and \ref{ex:qte}.) Define $\hat{\delta} = (\hat{\delta}_{1},\dots, \hat{\delta}_{q})$. 
	\item Compute $\G$, the set of all vectors of length $q$ with entries $1$ or $-1$, or replace $\G$ with a large random sample $\G_m$ from $\G$ in the following.
	\item\label{al:practice4} Reject the null hypothesis $H_0\colon \delta(u) = \delta_0(u)$ for all $u$ (e.g., $\delta_0\equiv 0$ tests for no effect of treatment) against
	\begin{enumerate}
		\item $\delta(u) > \delta_0(u)$ for some $u$ if $T(\hat{\delta}-\delta_0 1_q) >  T^{1-\alpha}(\hat{\delta}-\delta_0 1_q,\G)$ for a test with asymptotic level $\alpha$,
		\item $\delta(u) < \delta_0(u)$ for some $u$ if $T(\hat{\delta}-\delta_0 1_q) <  T^{\alpha}(\hat{\delta}-\delta_0 1_q,\G)$ for a test with asymptotic level $\alpha$,
		\item $\delta(u) \neq \delta_0(u)$ for some $u$ if (a) or (b) are true for a test with asymptotic level $2\alpha$,
	\end{enumerate}
	where $T$ is defined in \eqref{eq:crk} and $T^{1-\alpha}(\cdot, \G)$ is the $\lceil (1-\alpha)|\G|\rceil$-th largest value of the randomization distribution of $T$, defined in \eqref{eq:critval}.
\end{enumerate}
\end{algorithm}

\begin{algorithm}[CRK test for parameters identified between clusters]\label{al:between}\item[]
\begin{enumerate}
	\item Compute $\mathcal{H}$, as defined above \eqref{eq:hq1leqq0}, or replace $\mathcal{H}$ with a large subset $\mathcal{I}$ in the following.
	\item Compute $\G$, the set of all vectors of length $q$ with entries $1$ or $-1$, or replace $\G$ with a large random sample $\G_m$ from $\G$ in the following.
	\item For each $h$, compute $\hat{\delta}_{[h]}$ from \eqref{eq:hq1leqq0} if $q_1\leq q_0$ or from \eqref{eq:hq1geqq0} if $q_1 > q_0$. (See Example~\ref{ex:qdid}.) Use \eqref{eq:pval} and \eqref{eq:avgpval} to compute \begin{equation}\label{eq:avgpval2}
\frac{2}{|\mathcal{H}|}\sum_{h\in\mathcal{H}} p(\hat{\delta}_{[h]} - \delta_0 1_{\min\{q_1,q_0\}}, \G) \leq \alpha	
\end{equation}
	\item Reject the null hypothesis $H_0\colon \delta(u) = \delta_0(u)$ for all $u$ (e.g., $\delta_0\equiv 0$ tests for no effect of treatment) against
	\begin{enumerate}
		\item $\delta(u) > \delta_0(u)$ for some $u$ if \eqref{eq:avgpval2} is true for a test with asymptotic level $\alpha$,
		\item $\delta(u) < \delta_0(u)$ for some $u$ if \eqref{eq:avgpval2} is true when $\hat{\delta}_{[h]} - \delta_0 1_{\min\{q_1,q_0\}}$ is replaced by $-(\hat{\delta}_{[h]} - \delta_0 1_{\min\{q_1,q_0\}})$ for a test with asymptotic level $\alpha$,
		\item  $\delta(u) \neq \delta_0(u)$ for some $u$ if (a) or (b) are true for a test with asymptotic level $2\alpha$.
	\end{enumerate}
\end{enumerate}

\end{algorithm}

In some contexts, Algorithm \ref{al:within} can be used even if the parameter of interest is identified by comparisons between treated and untreated clusters. For this to work, the researcher has to merge each treated cluster with an untreated cluster into a single cluster to recover within-cluster identification.  If the number of treated clusters and control clusters is equal, then every treated cluster can be matched with a control cluster according to some rule. If the number of clusters is not equal, then two or more clusters can be merged to force an equal number of treated and control clusters. The merged clusters can then be reinterpreted as clusters and Algorithm~\ref{al:within} can be applied to these new clusters. While this comes with a large number of decisions, it is a valid method for inference if these decisions are made before the data are analyzed. For example, when estimating quantile treatment effects, a pre-analysis plan can be put in place that prescribes how clusters that received treatment will be merged with clusters that did not receive treatment. This reduces the problem to the one described in Example \ref{ex:qte}.

The next section investigates the finite sample performance of Algorithms~\ref{al:within} and \ref{al:between} in several sitations.

\section{Numerical results}\label{s:mc}

This section presents several Monte Carlo experiments to investigate the small-sample properties of the CRK test in comparison to other methods of inference. I discuss significance tests on quantile regression coefficient functions (Example~\ref{ex:qr2}), inference in experiments when parameters are identified between clusters (Example~\ref{ex:qte2}), and estimation of QTEs in Project STAR (Example~\ref{ex:star}). I test one-sided hypotheses to the right but the results apply more broadly.

\begin{example}[Regression quantiles, cont.] \label{ex:qr2} In this example, I adapt an experiment of \citet{hagemann2017} and use the data generating process (DGP) 
\begin{align*}
Y_{i,j,k} = U_{i,j,k} + U_{i,j,k}Z_{i,j,k},
\end{align*}
where $U_{i,j,k} = \sqrt{\varrho}V_{j,k} + \sqrt{1-\varrho}W_{i,j,k}$ with $\varrho \in [0, 1)$;  $V_{j,k}$ and $W_{i,j,k}$ are standard normal, independent of one another, and independent across indices. This ensures that the $U_{i,j,k}$ are standard normal and, for a given $j,k$, any pair $U_{i,j,k}$ and $U_{i',j,k}$ has correlation $\varrho$.  The $Z_{i,j,k}$ satisfy $Z_{i,j,k} = X^2_{i,j,k}/3$ with $X_{i,j,k}$ standard normal independent of $U_{i,j,k}$ to ensure that the $U_{i,j,k} Z_{i,j,k}$ have mean zero and variance one. Both $X_{i,j,k}$ and $U_{i,j,k}$ are independent across $j$ and $k$, and $X_{i,j,k}$ is also independent across $i$. I discard information on $k$ after data generation and drop the $k$ subscripts in the following because they are not assumed to be known. This induces a dependence structure where each cluster $j = 1,\dots, q$ consists of several (unknown) neighborhoods $k = 1,\dots,K$ where observations are dependent if they come from the same $k$ but are independent otherwise. If $K\to\infty$ and the size of the neighborhoods is fixed or grows slowly with $K$, then this dependence structure is compatible with Assumptions~\ref{as:weakc} and \ref{as:weakc2} because it generates the weak dependence needed for central limit theory. In the experiments ahead, I set $K$ to either 10 or 20 and draw the size of each neighborhood from the uniform distribution on $\{5,6,\dots,15\}$.
The DGP in the preceding display corresponds to the QR model
\begin{align}\label{eq:mcqr}
Q(u\mid X_{i,j}, Z_{i,j}) = \beta_0(u) + \beta_1(u)X_{i,j} + \beta_2(u)Z_{i,j}
\end{align}
with $\beta_1(u)\equiv 0$ and $\beta_0(u) = \Phi^{-1}(u) = \beta_2(u)$, where $\Phi$ is the standard normal distribution function. 
For the CRK test, I estimated \eqref{eq:mcqr} separately for each cluster, obtained $q$ estimates of $\beta_1$ and applied Algorithm \ref{al:within} with 1,000 new draws from $\G$ for each Monte Carlo replication.

To the best of my knowledge, there are no other methods of inference designed specifically for quantile functions or Kolmogorov-Smirnov statistics in data with few large clusters. I therefore compare the CRK test to inference with the wild gradient bootstrap \citep{hagemann2017}, a cluster-robust version of the bootstrap that requires the number of clusters $q\to\infty$ for consistency. The wild gradient bootstrap is the default option for cluster-robust inference in the \texttt{quantreg} package in \texttt{R}. I use the package default settings with Mammen bootstrap weights and 200 bootstrap simulations. 
Alternative analytical methods for cluster-robust inference in quantile regressions exist but can only perform pointwise inference because the QR process as $q\to\infty$ generally has an analytically intractable distribution. \citet{hagemann2017} shows that the wild gradient bootstrap can conduct uniform inference on quantile regression functions and that it outperforms other methods for pointwise inference in this context. 
However, \citet{hagemann2017} notes that size distortions can occur when fewer than 20 clusters are present. I therefore focus on this situation in the following.

\begin{figure}
\centering
\resizebox{\textwidth}{!}{
\input{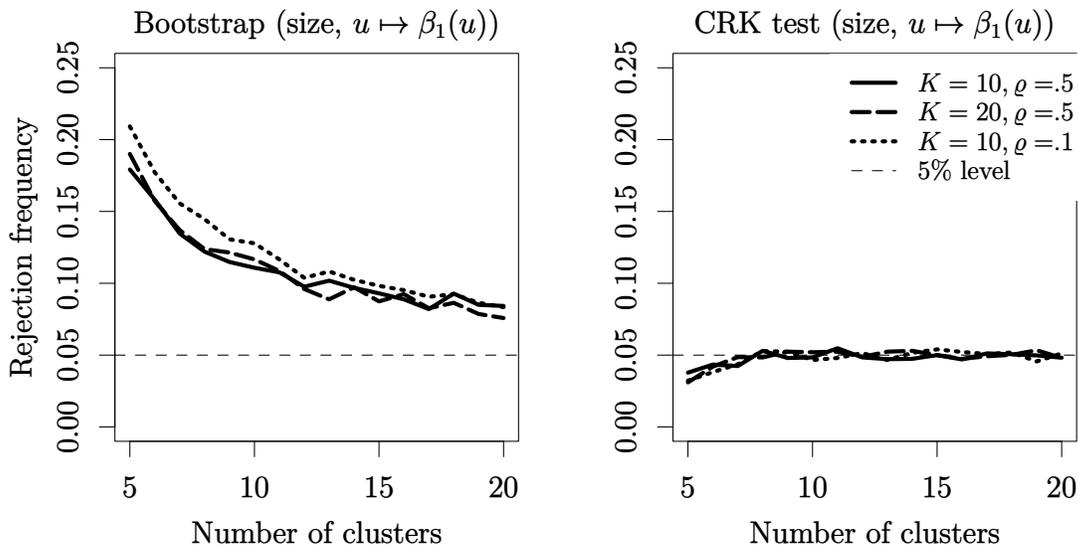}
}	
\caption{Rejection frequencies in Example \ref{ex:qr2} of a true null $H_0\colon \beta_1(u) = 0$ for all $u$ as a function of the number of clusters for the bootstrap (left) and the CRK test (right) with (i)~$K=10$ neighborhoods per cluster with intra-neighborhood correlation $\varrho = .5$ (solid lines), (ii)~$K=20$ with $\varrho = .5$ (long-dashed), and (iii)~$K=10$ with $\varrho = .1$ (dotted). Short-dashed line equals nominal level $.05$.}\label{f:ex41-fig-size}
\end{figure}

Figure~\ref{f:ex41-fig-size} shows the rejection frequencies of a true null hypothesis $H_0\colon \beta_1(u) = 0$ for all $u$ as a function of the number of clusters $q\in \{5,6,\dots, 20\}$ for the wild gradient bootstrap (left) and the CRK test (right) at the 5\% level (short-dashed line). The figure shows rejection frequencies in 5,000 Monte Carlo replications for each horizontal coordinate with (i)~$K=10$ neighborhoods per cluster with intra-neighborhood correlation $\varrho = .5$ (solid lines), (ii)~$K=20$ with $\varrho = .5$ (long-dashed), and (iii)~$K=10$ with $\varrho = .1$ (dotted). Both methods were faced with the same data and I estimated $\beta_1$ at $u = .1, .2, \dots, .9$ for both methods. As can be seen, the wild gradient bootstrap over-rejected mildly with 20 clusters but over-rejected substantially for smaller numbers of clusters. It exceeded a 10\% rejection rate if only 12 clusters were available. With 5 clusters, the wild gradient bootstrap falsely discovered an effect in up to 20.9\% of all cases ($K=10, \varrho = .1$). In contrast, the CRK test rejected at or slightly below nominal level for all $q$ and all configurations of $K$ and $\varrho$.

I also experimented with a large number of alternative DGPs under the null. I considered (not shown) larger neighborhoods, different values of $\varrho$, different spatial dependence structures such as (spatial) autoregressive models, and different distributions for $X_{i,j,k}$. However, I found that these changes had little qualitative impact on the results described in the preceding paragraph or in \citet{hagemann2017}. The wild gradient bootstrap generally performed very well but experienced size distortions with fewer than 20 clusters. The CRK test rejected at or slightly below nominal level in all situations I investigated.

I now turn to the behavior of the test under the alternative. I repeated the experiment but now tested the incorrect null hypothesis $H_0\colon \beta_2(u) = 0$ for all $u\in\mathcal{U}$. Figure~\ref{f:ex41-fig-pow} shows the rejection frequencies of this null against the alternative $H_1\colon \beta_2(u) > 0$ for some $u\in\mathcal{U}$, where $\mathcal{U}$ was either $(0, 1)$ (black) or $(.5, 1)$ (grey). The null hypothesis is false in both situations but the case where $\mathcal{U} = (0, 1)$ is more challenging because $\beta_2(u) < 0$ for all $u < .5$ so that estimates below the median provide evidence in the direction away from the alternative. I again considered (i)~$K=10$ neighborhoods per cluster with intra-neighborhood correlation $\varrho = .5$ (solid lines), (ii)~$K=20$ with $\varrho = .5$ (long-dashed), and (iii)~$K=10$ with $\varrho = .1$ (dotted). As could be expected, the bootstrap rejected a large fraction of null hypotheses mostly because it was unable to control the size of the test. However, it had high power when the number of clusters was above 20 and the size distortions disappeared (not shown). The CRK test had high power while maintaining size control even when the number of clusters was below 20. For example, at $q=12$ it detected a deviation from the null between 22.5\% ($K=10, \varrho=.5, \mathcal{U} = (0,1)$) and 84.26\% ($K=20, \varrho=.5, \mathcal{U} = (.5,1)$) of all cases. More generally, additional clusters, lower intra-cluster dependence, and additional neighborhoods per cluster increased the power of the CRK test. \sqed

\begin{figure}
\centering
\resizebox{\textwidth}{!}{
\input{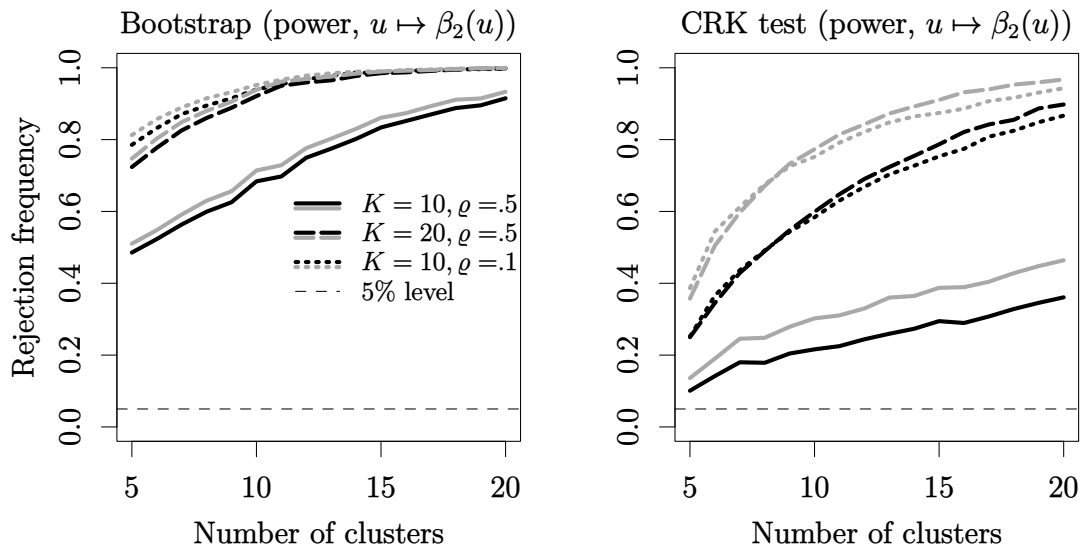}
}	
\caption{Rejection frequencies in Example \ref{ex:qr2} of false nulls $H_0\colon \beta_2(u) = 0$ for $u > .5$ (grey) and $H_0\colon \beta_2(u) = 0$ for all $u$ (black) as a function of the number of clusters for the bootstrap (left) and the CRK test (right) with (i)~$K=10$ neighborhoods per cluster with intra-neighborhood correlation $\varrho = .5$ (solid lines), (ii)~$K=20$ with $\varrho = .5$ (long-dashed), and (iii)~$K=10$ with $\varrho = .1$ (dotted). 
}
\label{f:ex41-fig-pow}
\end{figure}

\end{example}

\begin{example}[Quantile treatment effects, cont.]\label{ex:qte2}
	For this experiment, I reuse the setup of Example \ref{ex:qr2} but replace the variable $X_{i,j,k}$ with a cluster-level treatment indicator $D_j$ that equals one if cluster $j$ received treatment and equals zero otherwise.  I randomly assign $q_1 = \lfloor q/2 \rfloor$ clusters to treatment and $q_0 = \lceil q/2 \rceil$ to control. The coefficient of interest is $\delta$ in 
	\begin{align*}
Q(u\mid D_{j}) = \beta_0(u) + \delta(u)D_j + \beta_2(u)Z_{i,j}.
\end{align*}
I do not assume that pairings are predetermined and therefore use the adjusted $p$-values of the CRK test from Algorithm \ref{al:between}. For each Monte Carlo replication, I drew a collection $\mathcal{I}$ with $|\mathcal{I}| = 50$ from $\mathcal{H}$ without replacement. The CRK test with unknown cluster parings requires $\alpha = .05 > 1/2^{q_1\wedge q_0-1}$ to have power, which is satisfied here as long as $q\geq 12$. I therefore restrict $q$ to be between 12 and 20. All other parameters of the experiment are exactly as in Example \ref{ex:qr2}.

\begin{figure}
\centering
\resizebox{\textwidth}{!}{
\input{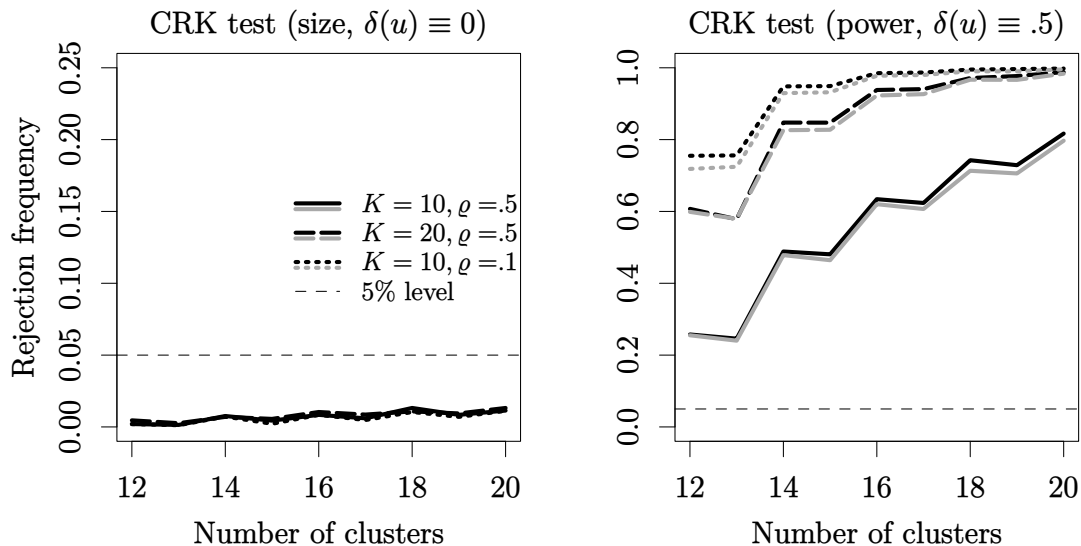}
}	
\caption{Rejection frequencies  in Example \ref{ex:qte2} of a true null (left) $H_0\colon \delta(u) = 0$ for all $u$ and false nulls (right) $H_0\colon \delta(u) = 0$ for $u > .5$ (grey) and $H_0\colon \delta(u) = 0$ for all $u$ (black) as a function of the number of clusters for the CRK test when cluster pairings are not known with (i)~$K=10$ neighborhoods per cluster with intra-neighborhood correlation $\varrho = .5$ (solid lines), (ii)~$K=20$ with $\varrho = .5$ (long-dashed), and (iii)~$K=10$ with $\varrho = .1$ (dotted). 
}
\label{f:ex42-fig}
\end{figure}

The left panel of Figure~\ref{f:ex42-fig} shows the rejection frequencies of a true null hypothesis $H_0\colon \delta(u) = 0$ for all $u$ in 5,000 Monte Carlo experiments per horizontal coordinate as $q$ increases. I again considered (i)~$K=10$ neighborhoods per cluster with intra-neighborhood correlation $\varrho = .5$ (solid lines), (ii)~$K=20$ with $\varrho = .5$ (long-dashed), and (iii)~$K=10$ with $\varrho = .1$ (dotted). As can be seen, adjusting the CRK test for unknown cluster pairings results in a markedly more conservative test relative to an unadjusted test from Figure~\ref{f:ex41-fig-size}. However, as the right panel of Figure~\ref{f:ex42-fig} shows, this did not translate into poor power under the alternative. When I repeated the experiment with $\delta(u)\equiv .5$, the CRK test with identification across clusters had no problem detecting that neither $H_0\colon \delta(u)$ for all $u\in (0,1)$ (black) nor $H_0\colon \delta(u)$ for $u > .5$ (grey) were true. Compared to Example \ref{ex:qr2}, the alternative where $\mathcal{U} = (0,1)$ rejects slightly more nulls because now every $u$ provides evidence against the null.

A noteworthy feature of the right panel of Figure~\ref{f:ex42-fig} is the ``zig-zag'' pattern in the rejection frequencies. The reason for this pattern is the treatment assignment mechanism. If $q=12$, then $q_1 = 6$ clusters receive treatment and $q_0 = 6$ do not. If $q=13$, then again $6 = \lfloor 13/2\rfloor$ clusters receive treatment but now $7 = \lceil 13/2 \rceil$ do not. Algorithm~\ref{al:between} uses a large number of potential pairings of treatment to control for inference but effectively reduces the number of clusters to $\min\{q_1,q_0\}$. In this experiment, inference with $6+7$ clusters is therefore effectively the same as inference with $6+6$ clusters, which explains the similar performance of the test at $q$ and $q-1$ when $q$ is odd.

I also experimented with alternative methods for combining the $p$-values in Algorithm~\ref{al:between}. For this, I repeated the experiment in the right panel of Figure~\ref{f:ex42-fig} (not shown) but replaced the left-hand side of \eqref{eq:avgpval2} with either the standard Bonferroni correction or the geometric average of the $p$-values times $\exp(1)$. The latter method is due to \citet{mattner2012} and discussed in detail in \citet{vovkwang2020}. 
At $q=12$ with $K=20$ and $\varrho = .5$, Bonferroni rejected the false null $H_0\colon \delta(u) = 0$ for all $u$ in none of the 5,000 simulations, \citeauthor{mattner2012}'s method rejected in 39.44\% of all simulations, and Algorithm \ref{al:between} rejected in 60.76\% of all simulations in the same data. At $q=20$, \citeauthor{mattner2012}'s method and Algorithm \ref{al:between} rejected in about 99\% of all cases. Bonferroni improved substantially and rejected in 96\% of all cases. I conducted a large number of additional experiments but the results remained the same. The Bonferroni method had by far the lowest power. Relative to Algorithm \ref{al:between}, \citeauthor{mattner2012}'s method had significantly lower power when the number of clusters was small but both alternatives caught up as this number increased. Neither Bonferroni nor \citeauthor{mattner2012}'s method improved the power of Algorithm \ref{al:between} in any of my experiments.

Finally, I compare Algorithm~\ref{al:between} to the ideal situation where an equal number of treated and control clusters are paired through a pilot experiment or pre-analysis plan. For this, I repeated the experiment in Figure~\ref{f:ex42-fig} with a single, randomly chosen pairing. At $q_1 = q_0 =8$, $K=10$, and $\varrho = .5$, a test with pre-specified pairs rejected a false $H_0\colon \delta(u) = 0$ for all $u$ in 82.16\% of all cases as opposed to the 63.40\% achieved by Algorithm \ref{al:between}. In the same experiment with $q_1 = q_0 =10$, the test with pre-specified pairs rejected 91.90\%  of all false nulls and Algorithm \ref{al:between} rejected 81.70\% of all false nulls. However, there are 8! = 40,320 potential ways of paring $q_1 = 8$ treated clusters and $q_0=8$ control clusters. If $q_1=q_0=10$, there are 3,628,800 ways. Each separate set of pairs could be potentially selected for the test. If a researcher did not pre-specify pairs and instead searched over potential pairs to discover a significant result, the test would quickly lose size control. At $q_1 = q_0 =6$, $K=10$, and $\varrho = .5$, a correct null hypothesis was erroneously rejected in 8.14\% of all cases when the lowest $p$-value among three potential pairings was used. If the researcher chooses the lowest $p$-value among ten potential matches, then a non-existent effect showed up as statistically significant in 13.96\% of all cases in a test with 5\% nominal level. Algorithm \ref{al:between} does not choose among these tests and completely avoids this loss of size control. \sqed
\end{example}

\begin{example}[Placebo interventions in Project STAR]\label{ex:star}
	In this example, I revisit a challenging placebo exercise of \citet[][Experiment~5.1]{hagemann2017} in data from the first year of the Tennessee \emph{Student/Teacher Achievement Ratio} experiment, known as Project STAR. Details about the data can be found in \citetalias{wordetal1990} and \citet{graham2008}. I only provide a brief summary. 

In 1985, incoming kindergarten students in 79 project schools were randomly assigned to small classes (13-17 students) or regular-size classes (22-25 students) with or without a teacher's aide. 
Each of the project schools was required to have at least one of each kindergarten class type. The outcome is standardized student performance on the \emph{Stanford Achievement Test} (SAT) in mathematics and reading administered at the end of the school year. The raw test scores are standardized as in \citet{krueger1999}. He finds across several mean regression models that students in small classes perform about five percentage points better on average than students in regular classrooms. (Assigning teachers aides had no effect uniformly across specifications and I do not consider such classes in the following.) \citet{jacksonpage2013} and \citet{hagemann2017} document similar effects in quantile regressions but show that the effects are smaller for students near the bottom and the very top of the conditional outcome distribution and larger near the center of the distribution. For example, in the model 
\begin{align}\label{eq:star1}
Q_{Y_{i,j}}(u\mid X_{i,j}) = \beta_0(u) + \delta(u) \mathit{small}_{i,j} + \beta_2(u)^T Z_{i,j}
\end{align}
where the treatment dummy $\mathit{small}$ indicates whether the student was assigned to a small class and $Z$ contains school dummies, the effect of being in a small class relative to a regular class varies between 2.78 percentage points at the 10th percentile to 7.23 percentage points at the 60th percentile. \citet{jacksonpage2013} hypothesize that this heterogeneity could be attributed to varying levels of student motivation to take advantage of increased individual attention from a teacher.

For the placebo experiment, I removed all small classes from the sample and only kept the 16 schools that had two regular-size classes without aide. In each of these 16 schools, I then randomly assigned one of the regular-size classes the treatment indicator $\mathit{small}=1$. This mimics the random assignment of class sizes within schools in the original sample, even though in this case no student actually attended a small class. I clustered at the classroom level and applied the CRK test as in Algorithm~\ref{al:within} by running 16 separate quantile regressions, one for each school, on a constant and $\mathit{small}$ to get 16 separate estimates of $\delta$. The fixed effects as in \eqref{eq:star1} are not needed here because the constant can vary freely by school in these quantile regressions. Algorithm~\ref{al:within} applies here because each school in this experiment has only one class with $\mathit{small} = 1$ and one class with $\mathit{small} = 0$. (If multiple small classes per school were available, then Algorithm \ref{al:between} could be used instead.) For the wild gradient bootstrap, I reran the QR in \eqref{eq:star1} in the placebo data and again clustered at the classroom level. For both methods, I tested at the 5\% level the correct null hypothesis that $H_0\colon \delta(u) = 0$ jointly at $u \in \{.1, .2, \dots, .9 \}$ against the alternative that $\delta$ is positive. 

The rejection frequencies in `size' column in Table \ref{t:placebo} show the outcome of repeating the placebo assignment 1,000 times. As can be seen, the CRK test provided a nearly exact test but the bootstrap over-rejected somewhat. The over-rejection for the bootstrap here was documented by \citet{hagemann2017} and  
can be attributed to the very small number of clusters available in the placebo sample vis-\`a-vis the large number of clusters needed for the consistency of the bootstrap.

\begin{table}\caption{Rejection frequencies of $H_0\colon \delta(u) = 0$ for all $u$ in placebo interventions in Project STAR for the CRK test and the wild gradient bootstrap at 5\% level}\label{t:placebo}
{\centering
\scalebox{1}{
\begin{tabular}{lp{0cm}cp{0cm}cccccc}	
\hline
& & size & &\multicolumn{6}{c}{power} \\ \cline{3-3}\cline{5-10}
& &$\delta = 0$ & &$\delta = 2$ & $\delta = 3$  &$\delta = 4$  & $\delta = 5$   & $\delta = 6$ & $\delta = 7$\\ \hline
CRK test & &.043 & &.122 & .161  &.212  & .318   & .379 & .478 \\
Bootstrap & &.091 & &.233 & .316 &.428   & .580   & .691 & .814 \\ \hline
\end{tabular}	
}}
\end{table}

I also investigated power by increasing the percentile scores of all students in the randomly drawn small classes of the placebo experiment by $\delta \in \{2,3,4,5,6 ,7\}$ percentage points. These increases are of the same or smaller magnitude as the estimated quantile treatment effects in the actual sample. Then I tested the incorrect hypothesis $H_0\colon \beta_1(u) = 0$ for all $u$ with the same experimental setup as before. The results are shown in `power' column of Table \ref{t:placebo}. As can be seen, the CRK test was able to reliably detect effects for moderate deviations from the null hypothesis. The wild gradient bootstrap rejected more often, but this was likely caused by its tendency to over-reject in this data set.\sqed
\end{example}

\section{Conclusion}
I introduce a generic method for inference on quantile and regression quantile processes in the presence of a finite number of large and arbitrarily heterogeneous clusters. The method asymptotically controls size by generating statistics that exhibit enough distributional symmetry such that randomization tests can be applied. This randomization test can even be asymptotically similar in empirically relevant situations. The test does not require ex-ante matching of clusters, is free of user-chosen parameters, and performs well at conventional significance levels with as few as five clusters. The main focus on the paper is inference on quantile treatment effects and quantile difference in differences but the method applies more broadly.  Numerical examples and an empirical application are provided.

\bibliographystyle{chicago}
\bibliography{qspec.bib}

\appendix

\section{Proofs}

\begin{proof}[Proof of Theorem \ref{th:hoeffding}]
Denote the inverse element of $g\in\G$ by $g^{-1}$ and the identity element by $\id$. The proof uses the inverse $g^{-1}$ of $g\in\G$ to clarify its role in the argument. However, note that inverting $g$ is redundant for this particular $\G$ because it satisfies $g^{-1} = g$. I first argue that $(T(gX))_{g\in\G}$ satisfies $(T(gX))_{g\in\G} \sim (T(g\tilde{g}^{-1}X))_{g\in\G}$, where $\tilde{g}$ is an arbitrary element of $\G$. For this, I show that both quantities must have the same distribution at continuity points and that $(T(gX))_{g\in\G}$ has a continuous distribution. I then argue that $(T(gX))_{g\in\G} \sim (T(g\tilde{g}^{-1}X))_{g\in\G}$ is enough for \citeauthor{hoeffding1952}'s (\citeyear{hoeffding1952}) argument to go through.

Take a finite grid of points $\mathcal{U}_m := \{i/m : i=0,1,\dots,m \} \cap \mathcal{U}$. Then every $u\in\mathcal{U}$ is a limit of a sequence in $\mathcal{U}_m$. Let $x\mapsto T_m(x) = \sup_{u\in\mathcal{U}_m }\sum_{j=1}^q x_j(u)/q$. Uniform continuity implies $T_m(x) \to T(x)$ and $(T_m(gX))_{g\in\G} \to (T(gX))_{g\in\G}$ almost surely and therefore also $(T_m(gX))_{g\in\G} \leadsto (T(gX))_{g\in\G}$.  Independence and $\prob(X_j(u) = 0) = 0$ ensure that $\sum_{j=1}^q X_j(u)/q$ has a continuous distribution at every $u$. Because $X$ is separable, $T(gX) = \sup_{u\in\mathcal{U}\cap \mathbb{Q}} \sum_{j=1}^q X_j(u)/q$, where $\mathbb{Q}$ are the rationals. Conclude that $(T(gX))_{g\in\G}$ has a continuous distribution because for arbitrary $t_g\in\mathbb{R}$, \[ \prob\bigcap_{g\in\G} \{T(gX) = t_g\} \leq \prob\Biggl(\sup_{u\in\mathcal{U}\cap \mathbb{Q}} \frac{1}{q}\sum_{j=1}^q X_j(u) = t_{\id}\Biggr)\leq \bigcup_{u\in\mathcal{U}\cap \mathbb{Q}}\prob\Biggl(\frac{1}{q}\sum_{j=1}^q X_j(u) = t_{\id}\Biggr) \] and the extreme right-hand side equals zero.  Finite-dimensional distributional invariance implies that $(T_m(gX))_{g\in\G}$ and $(T_m(g\tilde{g}^{-1}X))_{g\in\G}$ have the same distribution for every $\tilde{g}\in\G$. Because $(T_m(gX))_{g\in\G}\leadsto (T(gX))_{g\in\G}$, it must also be true that $(T_m( g\tilde{g}^{-1}X))_{g\in\G}\leadsto (T(gX))_{g\in\G}$ and $(T_m(g\tilde{g}^{-1}X))_{g\in\G}\leadsto (T(g\tilde{g}^{-1}X))_{g\in\G}$. Conclude from continuity that $(T(gX))_{g\in\G}$ and $(T(g\tilde{g}^{-1}X))_{g\in\G}$ have the same distribution for every $\tilde{g}\in\G$. These two random vectors are of the form
\[ (T(X),\dots, T(gX), \dots, T(\tilde{g}X), \dots) \sim (T(\tilde{g}^{-1}X),\dots, T(g\tilde{g}^{-1}X), \dots, T(\id X), \dots). \]
Because $T^{1-\alpha}(X,\G \tilde{g}^{-1} ) = T^{1-\alpha}(X,\G) = T^{1-\alpha}(\tilde{g}X,\G)$, this implies $\varphi_{\alpha}(X, \G ) \sim \varphi_{\alpha}(\tilde{g}X, \G )$, where $\varphi_{\alpha}(\tilde{g}X, \G )$ is the test function $\varphi_{\alpha}(X, \G )$ computed with $\tilde{g}X$ instead of $X$. Because $\tilde{g}\in\G$ was arbitrary, conclude $\ev \sum_{g\in\G} \varphi_{\alpha}(gX, \G ) = \ev \varphi_{\alpha}(X, \G )|\G| $. The same argument as the finite-dimensional case now yields $\ev \varphi_{\alpha}(X, \G )\leq \alpha$.

For the randomized test decision introduced in \eqref{eq:randtest}, the arguments so far also imply that $\phi_{\alpha}(X, \G ) \sim \phi_{\alpha}(gX, \G )$ for every $g\in\G$. Use the definition on $a(X)$ to see that $\sum_{g\in\G} \phi_{\alpha}(gX, \G ) =  |\{g\in\G : T(gX) > T^{1-\alpha}(X,\G )| + a(X) |\{g\in\G : T(gX) = T^{1-\alpha}(X,\G )| = \alpha |\G|$
and therefore \[ \prob (\phi_{\alpha}(X, \G) \geq V) = \ev \phi_{\alpha}(X, \G )= \frac{1}{|\G|}\sum_{g\in\G}  \ev \phi_{\alpha}(gX, \G ) = \frac{1}{|\G|} \ev\sum_{g\in\G}  \phi_{\alpha}(gX, \G ) = \alpha, \] as desired.
\end{proof}

\begin{proof}[Proof of Theorem \ref{th:asysize}]
For $x, x'\in \ell^\infty(\mathcal{U})^q$ and every $g\in\G $, sub-additivity and monotonicity give
\[ T(g x) - T(g x') \leq \sup_{u \in \mathcal{U}}\frac{1}{q}\Biggl(\sum_{j=1}^q g_j \bigl(x_j(u) - x_j'(u)\bigr)\Biggr) \leq  \sup_{u \in \mathcal{U}} \frac{1}{q}\sum_{j=1}^q \bigl|x_j(u) - x_j'(u)\bigr|.  \] The far right of the display is at most $|x - x'|_\mathcal{U}/\sqrt{q}$. Reverse the roles of $x$ and $x'$ to conclude $ |T(g x) - T(g x')|^2 \leq |x - x'|^2_\mathcal{U}/q$ for every $g\in\G $ and therefore \[ \bigl|\bigl(T(g x) - T(g x')\bigr)_{g\in\G }\bigr| \leq \sqrt{2^q/q}|x - x'|_\mathcal{U}. \] Let $|x - x'|_\mathcal{U}\to 0$ to deduce that $x \mapsto (T(g x))_{g\in\G }$ a continuous map from $\ell^\infty(\mathcal{U})^q$ to $\mathbb{R}^{|\G |}$ with respect to the sup-norm. Because $X_n \leadsto X$, the continuous mapping theorem implies $(T(g X_n))_{g\in\G } \leadsto (T(g X))_{g\in\G }$. 

Order $\G $ so that the identity action $g = (1, \dots, 1)$ is the first element. Define \begin{equation}\label{eq:boundaryset}
 B_\alpha = \Bigl\{ (t_1, t_2, \dots, t_{|\G |}) : |\{2\leq i\leq |\G | : t_i < t_1 \}| \geq \lceil (1-\alpha)|\G |\rceil \Bigr\}
 \end{equation} 
as the set of all vectors where the first element of the vector exceeds at least $\lceil (1-\alpha)|\G |\rceil$ of the remaining elements. Because  only the relative ranking of $(T(g X))_{g\in\G }$ enters the test decision, the test rejects if and only if $(T(g X))_{g\in\G } \in B_\alpha$. Conclude that $\prob(T(X) > T^{\alpha}(X, \G )) = \prob((T(g X))_{g\in\G } \in B_\alpha)$. The boundary $\partial B_\alpha$ of $B_\alpha$ can be expressed as \[ \partial B_\alpha = \bigcup_{j \geq 1} \Bigl\{ (t_1, t_2, \dots, t_{|\G |}) : |t_1 = t_i| = j, |\{2\leq i\leq |\G | : t_i < t_1 \}| = \lceil (1-\alpha)|\G |\rceil - j \Bigr\} \] and therefore $\partial B_\alpha \subset \bigcup_{j\geq 1} \{ (t_1, t_2, \dots, t_{|\G |}) : |t_1 = t_i| = j\}$. By the portmanteau lemma, $\prob((T(g X_n))_{g\in\G } \in B_\alpha) \to \prob((T(g X))_{g\in\G } \in B_\alpha)$ as long $\partial B_\alpha$ satisfies $\prob((T(g X))_{g\in\G } \in \partial B_\alpha) = 0.$ The goal is therefore to show that \[ \prob\biggl((T(g X))_{g\in\G } \in \bigcup_{j\geq 1} \{ (t_1, t_2, \dots, t_{|\G |}) : |t_1 = t_i| = j\}\biggr) = 0, \]
i.e., $(T(g X))_{g\in\G }$ has no ties with probability one.

The main difficulty here is that each component of $(T(g X))_{g\in\G }$ is dependent, so the preceding display does not follow from smoothness of the marginals of $(T(g X))_{g\in\G }$. Instead, for $u,u' \in \mathcal{U}$ and $g \neq g'$, write
\[ \sum_{j=1}^q g_j X_j(u) - \sum_{j=1}^q g_j' X_j(u')  = (g, -g')^T (X(u), X(u')) \]
Because $X$ is a Gaussian process, it follows that $(X(u), X(u'))$ is a jointly Gaussian vector and therefore $(g, -g')^T (X(u), X(u'))$ is a normally distributed random variable. 

If $u = u'$ or $u \neq u'$ but $X(u) = X(u')$, then $g\neq g'$ guarantees that $\sum_{j=1}^q g_j X_j(u) - \sum_{j=1}^q g_j' X_j(u) = \sum_{j=1}^q (g_j-g_j') X_j(u)$ has non-zero variance. Hence, suppose $u \neq u'$ and $X(u) \neq X(u')$. Let $c(u, u') = \ev X(u) X(u')$ be the covariance function and note that $(g, -g')^T (X(u), X(u'))$ is zero with positive probability if and only if $(g, -g')^T c(u, u') (g, -g') = 0$. Because the elements of $X$ are independent, the covariance function satisfies 
\[ (g, -g')^T c(u, u') (g, -g') = \sum_{j=1}^n c_{jj}(u, u) + \sum_{j=1}^n c_{jj}(u', u') - 2\sum_{j=1}^n g_j g_j' c_{jj}(u, u'). 
\] Apply the Cauchy-Schwarz inequality to the right-hand side to deduce \[ 0 = (g, -g')^T c(u, u') (g, -g') \geq \sum_{j=1}^n \bigl (c_{jj}(u, u) - c_{jj}(u', u') \bigr)^2, \] which implies $\var X_j(u) = \var X_j(u')$ for $1\leq j \leq q$. It follows that 
\[ 0 =  \sum_{j=1}^n \bigl (c_{jj}(u, u) - g_j g_j' c_{jj}(u, u')\bigr) 
\] Apply the Cauchy-Schwarz inequality again to see that every covariance must be non-zero because $c_{jj}(u, u) > 0$ and either $c_{jj}(u, u') = c_{jj}(u, u)$ or $c_{jj}(u, u') = -c_{jj}(u, u)$. This implies that either $X_{j}(u) = X_{j}(u')$ or $X_{j}(u) = -X_{j}(u')$. Because $g\neq g'$, $X(u) = X(u')$ is impossible and at least one $j$ must satisfy $X_{j}(u) = -X_{j}(u')$, which is ruled out by assumption. Conclude \[ \sum_{j=1}^q g_j X_j(u) \neq \sum_{j=1}^q g_j' X_j(u')\] almost surely for all $u, u'\in \mathcal{U}$ and all $g\neq g'$. Because $\mathcal{U}$ is compact and $X$ has continuous sample paths, this ensures \[ T(gX) - T(g'X) = \max_{u\in \mathcal{U}} \sum_{j=1}^q g_j X_j(u) - \max_{u\in \mathcal{U}} \sum_{j=1}^q g_j' X_j(u) \neq 0 \] for almost every sample path unless $g=g'$.
\end{proof}

\begin{proof}[Proof of Theorem \ref{th:size}] If $H_0$ is true, then scale invariance implies $\varphi_{\alpha}(\hat{\delta} - \delta_0 1_q, \G ) = \varphi_{\alpha}(X_n, \G )$. Assumption~\ref{as:weakc} and Theorem~\ref{th:asysize} yield $\ev \varphi_{\alpha}(\hat{\delta} - \delta_0 1_q, \G ) \to \ev \varphi_{\alpha}(X, \G )$. $\ev \varphi_{\alpha}(X, \G )\leq \alpha$ holds because $X$ satisfies the conditions of Theorem~\ref{th:hoeffding}.
\end{proof}

\begin{proof}[Proof of Theorem \ref{th:power}]
Suppose $\delta = \delta_0 + \lambda/\sqrt{n}$ so that $X_n + \lambda 1_q \wto{\delta} X + \lambda 1_q$. As in the proof of Theorem \ref{th:size}, joint continuity of the map $x \mapsto (T(gx))_{g\in\G }$ implies $(T(g(X_n + \lambda 1_q)))_{g\in\G }\wto{\delta} (T(g(X + \lambda 1_q)))_{g\in\G }$. With $B_\alpha$ as defined in \eqref{eq:boundaryset}, I only have to show that $\prob((T(g(X + \lambda)))_{g\in\G } \in \partial B_\alpha) = 0$ to conclude $\prob(T(X_n + \lambda) > T^{\alpha}(X_n + \lambda, \G )) \to \prob(T(X + \lambda) > T^{\alpha}(X + \lambda, \G ))$.

The boundary has probability zero if $(T(g(X + \lambda)))_{g\in\G }$ has no ties. For $u,u' \in \mathcal{U}$ and $g \neq g'$, write
\[ \Biggl[ \sum_{j=1}^q g_j X_j(u) - \sum_{j=1}^q g_j' X_j(u') \Biggr] + \lambda(u)\sum_{j=1}^q g_j - \lambda(u')\sum_{j=1}^q g_j', \]
to see from the proof of Theorem \ref{th:size} that the term in square brackets is nonzero almost surely for all $u, u'\in \mathcal{U}$ and all $g\neq g'$. Because the expression in square brackets is normally distributed with mean zero, it cannot take on any fixed nonzero value with positive probability. The remainder of the preceding display is constant. Conclude that the preceding display is nonzero almost surely for all $u, u'\in \mathcal{U}$ and all $g\neq g'$. As in the proof of Theorem \ref{th:size}, this implies $T(g(X + \lambda)) \neq T(g'(X + \lambda))$ almost surely unless $g\neq g'$.

I will now develop a lower bound on $\prob(T(X + \lambda 1_q) > T^{\alpha}(X + \lambda  1_q, \G ))$. Because the original statistic cannot exceed the largest order statistic, monotonicity implies 
\begin{align*}
 \prob\bigl(T(X + \lambda  1_q) > T^{1-\alpha}(X + \lambda  1_q, \G )\bigr) &\geq \prob\bigl(T(X + \lambda  1_q) > T^{(|\G |-1)}(X + \lambda  1_q, \G )\bigr) \\
 &=\prob\Bigl(T(X + \lambda  1_q) = \max_{g\in\G }T\bigl(g(X + \lambda  1_q)\bigr)\Bigr)
\end{align*}
and the right-hand side is at most \[ \prob\biggl(\Bigl\{T(X + \lambda  1_q) = \max_{g\in\G }T\bigl(g(X + \lambda  1_q)\bigr) \Bigr\}, \bigcap_{j=1}^q \Bigl\{ \inf_{u\in\mathcal{U}}(X_j(u) + \lambda(u)) \geq 0  \Bigr\} \biggr). \]
If $\inf_{u\in\mathcal{U}}(X_j(u) + \lambda(u)) \geq 0$ for $1\leq j\leq q$, then $T(X + \lambda  1_q) = \max_{g\in\G }T(g(X + \lambda  1_q))$ because $T$ cannot be increased by making large negative values positive through multiplication by $-1$. By independence and symmetry of the Gaussian processes, conclude that the preceding display equals \[ \prod_{j=1}^q \prob\Bigl (\inf_{u\in\mathcal{U}}\bigl(X_j(u) + \lambda(u)\bigr) \geq 0\Bigr) = \prod_{j=1}^q \prob\Bigl (\sup_{u\in\mathcal{U}}\bigl(X_j(u) - \lambda(u)\bigr) \leq 0\Bigr). \] 
 Because $\sup (f - f') \geq \sup f - \sup f'$ for arbitrary $f,f'$, this cannot exceed \[ \prod_{j=1}^q \prob\Bigl (\sup_{u\in\mathcal{U}}X_j(u) \leq \sup_{u\in\mathcal{U}} \lambda(u)\Bigr) \geq \prod_{j=1}^q \Bigl( 1 - e^{-[\sup_{u} \lambda(u) - \ev \sup_{u} X_j(u)]^2/2 \sup_{u} \ev X^2_j(u)}   \Bigr) \] by the Borell-TIS inequality as long as $\sup_{u} \lambda(u) > \ev \sup_{u} X_j(u)$. In that case, the right-hand side of the preceding display is strictly positive, as required.

Suppose $\delta = \delta_0 + \lambda 1_q$. We have $\hat{\delta} - \delta_0 = X_n/\sqrt{n} + \lambda 1_q$ with $X_n/\sqrt{n}\leadsto 0$, and by arguments as in the proof of Theorem~\ref{as:weakc}, the continuous mapping theorem yields $(T(g(\hat{\delta} - \delta_0 1_q)))_{g\in\G } \leadsto (T(g\lambda))_{g\in\G }$. Monotonicity implies \[ \ev \varphi_{1-\alpha}(\hat{\delta}-\delta_0 1_q,\G ) \geq \prob\bigl( T(\hat{\delta}-\delta_0 1_q) > T^{(|\G |-1)}(\hat{\delta}-\delta_0 1_q,\G ) \bigr) \] As before, use a set of the form \[ B = \Bigl\{ (t_1, t_2, \dots, t_{|\G |}) : |\{2\leq i\leq |\G | : t_i < t_1 \}| \geq |\G | - 1 \Bigr\} \] to write $\prob(T(\lambda) > T^{(|\G |-1)}(\lambda, \G )) = \prob((T(g \lambda))_{g\in\G } \in B) = 1\{(T(g \lambda))_{g\in\G } \in B\}$. The boundary $\partial B$ is contained in the set \[ \bigcup_{j \geq 1} \Bigl\{ (t_1, t_2, \dots, t_{|\G |}) : |t_1 = t_i| = j\Bigr\}. \] Because $T(g\lambda) = \sup_{u\in\mathcal{U}}  \lambda(u)\sum_{j=q}^q g_j/q$ with $\lambda \geq 0$ and $\sup_{u\in\mathcal{U}} \lambda(u) > 0$, we have $T(\lambda) > T(g\lambda)$ for all $g\neq \mathrm{id} := (1, \dots, 1)$. Hence, there are no ties with the first element of $(T(g \lambda))_{g\in\G }$ and $1\{(T(g \lambda))_{g\in\G } \in \partial B\} = 0$. Conclude from the portmanteau lemma that $T(\hat{\delta} - \delta_0 1_q) - T^{(|\G |-1)}(\hat{\delta} - \delta_0 1_q, \G ) \leadsto \sup_{u\in\mathcal{U}}  \lambda(u) - \sup_{u\in\mathcal{U}} \lambda(u) \sum_{j=q}^q g_j /q$ for some $g\neq \mathrm{id}$. Because this limit is strictly positive, \[ \ev \varphi_{1-\alpha}(\hat{\delta}-\delta_0,\G ) \geq \prob\bigl( T(\hat{\delta}-\delta_0) > T^{(|\G |-1)}(\hat{\delta}-\theta_0,\G ) \bigr) \to 1,\] as required.
\end{proof}

\begin{proof}[Proof of Theorem \ref{th:size2}]
I can work with $X_{n,[h]} = \sqrt{n}(\hat{\delta}_{[h]} - \delta_0 1_{\min\{q_1,q_0\}})$ instead of $\hat{\delta}_{[h]} - \delta_0 1_{\min\{q_1,q_0\}}$ because $x\mapsto p(x, \G)$ is scale invariant. In the following I make repeated use of the fact that the map $x\mapsto (x_{[h]})_{h\in\mathcal{H}}$, the map $x_{[h]}\mapsto (T(gx_{[h]}))_{g\in\G}$, and their composition are continuous.

Suppose $q_1\leq q_0$. The case $q_1 > q_0$ requires only notational changes.  The components of $X_{n, [h]}$ are of the form $\sqrt{n}(\hat{\delta}_{j,h(j)} - \delta_0) = \sqrt{n}(\hat{\delta}_{j,h(j)} - \delta)$ under the null hypothesis. By Assumption~\ref{as:weakc2}, these components converge in distribution to $X_{[h]} := (X_{1,h(1)}, \dots, X_{q_1,h(q_1)})$ jointly in $h$. The same arguments as in the proof of Theorem~\ref{th:size} imply that $T(g X_{n,[h]})$ converges in distribution, jointly in $h$ and $g$, to $T(g X_{[h]})$. For the same reasons as in the proof of Theorem~\ref{th:size}, for a given $h$, $(T(gX_{[h]}))_{g\in\G \setminus \id }$ has no ties $T(X_{[h]})$ with probability $1$, provided Assumption~\ref{as:weakc2} holds.

 Consider \[ |\G| \sum_{h\in\mathcal{H}} p(X_{n, [h]}, \G) = \sum_{h\in\mathcal{H}}\sum_{g\in\G} 1\{ T(gX_{n, [h]}) \geq  T(X_{n, [h]})\}. \]This function jumps discretely if, for some $h$ and $g$, $T(gX_{n, [h]}) =  T(X_{n, [h]})$. The continuous mapping theorem applies to this function if the probability of hitting these jumps is zero, i.e., $\prob(T(gX_{n, [h]}) =  T(X_{n, [h]}) \text{ for some } g\in\G, h\in\mathcal{H}) = 0$. The union bound implies that this probability cannot exceed $\sum_{h\in\mathcal{H}}\sum_{g\in\G}\prob(T(gX_{n, [h]}) =  T(X_{n, [h]})) = 0$ because $(T(gX_{[h]}))_{g\in\G}$ has no ties almost surely. Conclude that the preceding display converges in distribution to  $\sum_{h\in\mathcal{H}}\sum_{g\in\G} 1\{ T(gX_{[h]}) \geq  T(X_{[h]})\}$ and therefore \[ \prob\Biggl( \frac{2}{H} \sum_{h\in\mathcal{H}} p(X_{n,[h]}, \G) \leq \alpha \Biggr) \to \prob\Biggl( \frac{2}{H} \sum_{h\in\mathcal{H}} p(X_{[h]}, \G) \leq \alpha \Biggr) \] if $\alpha$ is a continuity point of the right-hand side. Because $|\G| \sum_{h\in\mathcal{H}} p(X_{[h]}, \G)$ is integer-valued, non-integer values of $\alpha H |\G|/2$ are continuity points. 
 
For integer $\alpha H |\G|/2$, find an $\varepsilon > 0$ such that $(\alpha + \varepsilon) H |\G|/2$ is not an integer but $\alpha + \varepsilon \leq 1$. Monotonicity and weak convergence imply \[ \limsup_{n\to\infty}\prob\Biggl( \frac{2}{H} \sum_{h\in\mathcal{H}} p(X_{n,[h]}, \G) \leq \alpha \Biggr) \leq \prob\Biggl( \frac{2}{H} \sum_{h\in\mathcal{H}} p(X_{[h]}, \G) \leq \alpha + \varepsilon \Biggr). \] By the \citet{ruschendorf1982} inequality, the right-hand side cannot exceed $\alpha + \varepsilon$. Now let $\varepsilon \searrow 0$ to obtain the desired result.
\end{proof}

\begin{proof}[Proof of Theorem \ref{th:power2}]
As in the proof of Theorem \ref{th:size2}, let $X_{n,[h]} = \sqrt{n}(\hat{\delta}_{[h]} - \delta_0 1_{\min\{q_1,q_0\}})$ and $q_1\leq q_0$ without loss of generality. Consider fixed alternatives $\delta = \delta_0 + \lambda$. The components of $\hat{\delta}_{[h]} - \delta_{0}1_{q_1}$ are of the form \[\hat{\delta}_{j,h(j)} - \delta_0  = \sqrt{n}(\hat{\delta}_{j,k} - \delta)/\sqrt{n} +\lambda \leadsto \lambda \] by uniform continuity. Deduce that for every $g$ and $h$, $T(X_{n,[h]}/\sqrt{n})-T(g X_{n,[h]}/\sqrt{n})$ converges in probability to $T(\lambda_{[h]})-T(g \lambda_{[h]})$. For $g\neq \mathrm{id}$, this limit equals \[\sup_{u\in\mathcal{U}} \lambda(u) - \sup_{u\in\mathcal{U}} \lambda(u) \sum_{j=q}^q g_j/q > 0.\] Zero is therefore a continuity point of the (degenerate) limiting distribution of  $T(X_{n,[h]}/\sqrt{n})-T(g X_{n,[h]}/\sqrt{n})$, which implies \[ \prob\Bigl(T(g X_{n,[h]}/\sqrt{n}) \geq T(X_{n,[h]}/\sqrt{n})\Bigr) \to 0\] and $1\{ T(g X_{n,[h]}/\sqrt{n}) \geq T(X_{n,[h]}/\sqrt{n}) \} \to 0$ for every $g\neq \mathrm{id}$ and $h$. Conclude that \[ \frac{1}{H} \sum_{h\in\mathcal{H}} p(X_{n, [h]}, \G) = \frac{1}{|\G|H}\sum_{h\in\mathcal{H}}\sum_{g\in\G} 1\{ T(gX_{n, [h]}) \geq  T(X_{n, [h]})\} \leadsto \frac{1}{|\G|} \] and therefore \[ \prob\Biggl( \frac{2}{H}\sum_{h\in\mathcal{H}} p(\hat{\delta}_{[h]} - \delta_0 1_{q_1}, \G) \leq \alpha \Biggr) \to 1\{ 2 \leq \alpha |\G| \} \] as long as $\alpha|\G| \neq 2$ to guarantee that convergence occurs at a continuity point.

Now consider local alternatives $u \mapsto \delta(u) = \delta_0(u) + c\lambda(u)/\sqrt{n}$ with $c$ constant. As in the proof of Theorem~\ref{th:size}, continuity of the maps $x \mapsto x_{[h]}$ and $x_{[h]} \mapsto (T(gx_{[h]}))_{g\in\G }$ implies $(T(g(X_{n,[h]} + c\lambda 1_{q_1})))_{g\in\G }\leadsto (T(g(X_{[h]} + c\lambda 1_{q_1})))_{g\in\G }$ jointly in $h\in\mathcal{H}$. For a given $h$, $(T(g(X_{[h]} + c\lambda 1_{q_1})))_{g\in\G }$ again has no ties with probability $1$. As before, deduce \[ \prob\Biggl( \frac{2}{H} \sum_{h\in\mathcal{H}} p(X_{n,[h]}  + c\lambda 1_{q_1}, \G) \leq \alpha \Biggr) \to \prob\Biggl( \frac{2}{H} \sum_{h\in\mathcal{H}} p(X_{[h]}+ c\lambda 1_{q_1}, \G) \leq \alpha \Biggr) \] if $\alpha$ is a continuity point of the right-hand side. Because $|\G| \sum_{h\in\mathcal{H}} p(X_{[h]}+ c\lambda 1_{q_1}, \G)$ is integer-valued, non-integer values of $\alpha H |\G|/2$ are continuity points. For integer $\alpha H |\G|/2$, find an $\varepsilon > 0$ such that $(\alpha - \varepsilon) H |\G|/2$ is not an integer but $\alpha - \varepsilon > 0$. \[ \liminf_{n\to\infty}\prob\Biggl( \frac{2}{H} \sum_{h\in\mathcal{H}} p(X_{n,[h]}  + c\lambda 1_{q_1}, \G) \leq \alpha \Biggr) \geq \prob\Biggl( \frac{2}{H} \sum_{h\in\mathcal{H}} p(X_{[h]}+ c\lambda 1_{q_1}, \G) \leq \alpha - \varepsilon \Biggr) \] by monotonicity. Let $\varepsilon \searrow 0$ to see that the limit inferior is bounded below by \[ \prob\Biggl( \frac{2}{H} \sum_{h\in\mathcal{H}} p(X_{[h]}+ c\lambda 1_{q_1}) < \alpha \Biggr). \] The same bound holds trivially for non-integer $\alpha H |\G|/2$.

For the analysis as $c \to \infty$, consider \[ \frac{2}{H} \sum_{h\in\mathcal{H}} p(X_{[h]}+ c\lambda 1_{q_1}) = \frac{2}{|\G|H}\sum_{h\in\mathcal{H}}\sum_{g\in\G} 1\Biggr\{ \frac{T(g(X_{[h]}+ c\lambda 1_{q_1}))}{T(c\lambda 1_{q_1})} \geq  \frac{T(X_{[h]}+ c\lambda 1_{q_1})}{T(c\lambda 1_{q_1})}\Biggr\}. \] For $g = \mathrm{id}$, the indicator function in the preceding display equals $q$. Consider $g\neq \mathrm{id}$. Because $T(c\lambda 1_{q_1}) = c T(\lambda 1_{q_1})  > 0$ and $T(gX_{[h]})/T(c\lambda 1_{q_1})\to 0$ almost surely for every $g\in\G$ as $c\to\infty$, it follows from the subadditivity of suprema that $T(g(X_{[h]} + c\lambda 1_{q_1}))/T(c\lambda 1_{q_1})\to T(g\lambda 1_{q_1})/T(\lambda 1_{q_1})$ almost surely and therefore $(T(X_{[h]} + c\lambda 1_{q_1}) - T(g(X_{[h]} + c\lambda 1_{q_1})))/T(c\lambda 1_{q_1}) \to 1 - (T(g\lambda 1_{q_1})/T(\lambda 1_{q_1}))$ almost surely. That last limit is a strictly positive constant for every $g\neq \mathrm{id}$ and there is one $\mathrm{id}$ for every $h$. Conclude from the continuous mapping theorem that the preceding display converges almost surely to $2/|\G|$ as $c\to\infty$. If $\alpha|\G| \neq 2$, it follows that \[ \lim_{c \to \infty} \prob\Biggl( \frac{2}{H} \sum_{h\in\mathcal{H}} p(X_{[h]}+ c\lambda 1_q) < \alpha \Biggr) = 1\{ 2 < \alpha |\G| \},  \] as required.
\end{proof}

\begin{proof}[Proof of Proposition \ref{pr:randomrueschendorf}]
	If $\mathcal{I}$ is fixed, then the proof of Theorems \ref{th:size2} and \ref{th:power2} goes through without any changes. For random $\mathcal{I}$, work conditional on $\mathcal{I}$ to see that Theorem \ref{th:size2} implies \[ \limsup_{n\to\infty}\prob\Biggl( \frac{2}{|\mathcal{I}|}\sum_{h\in\mathcal{I}} p(\hat{\delta}_{[h]} - \delta_0 1_{\min\{q_1,q_0\}}, \G) \leq \alpha \mid \mathcal{I} \Biggr)\leq \alpha\] almost surely. Apply expectations to conclude from the (reverse) Fatou lemma that 
	\begin{align*}
		&\limsup_{n\to\infty}\prob\Biggl( \frac{2}{|\mathcal{I}|}\sum_{h\in\mathcal{I}} p(\hat{\delta}_{[h]} - \delta_0 1_{\min\{q_1,q_0\}}, \G) \leq \alpha\Biggr) \\ 
		&\qquad \leq\ev \limsup_{n\to\infty}\prob\Biggl( \frac{2}{|\mathcal{I}|}\sum_{h\in\mathcal{I}} p(\hat{\delta}_{[h]} - \delta_0 1_{\min\{q_1,q_0\}}, \G) \leq \alpha \mid \mathcal{I} \Biggr)\leq \alpha
	\end{align*}
as needed. Similarly, Fatou's lemma implies
	\begin{align*}
		&\liminf_{n\to\infty}\prob\Biggl( \frac{2}{|\mathcal{I}|}\sum_{h\in\mathcal{I}} p(\hat{\delta}_{[h]} - \delta_0 1_{\min\{q_1,q_0\}}, \G) \leq \alpha\Biggr) \\ 
		&\qquad \geq\ev \liminf_{n\to\infty}\prob\Biggl( \frac{2}{|\mathcal{I}|}\sum_{h\in\mathcal{I}} p(\hat{\delta}_{[h]} - \delta_0 1_{\min\{q_1,q_0\}}, \G) \leq \alpha \mid \mathcal{I} \Biggr).
	\end{align*}
	Now apply the first part of Theorem \ref{th:power2} for a given $\mathcal{I}$ to get the result for fixed alternatives. For local alternatives, the proof of Theorem \ref{th:power2} implies 
	\begin{align*}
	&\liminf_{n\to\infty}\prob\Biggl( \frac{2}{|\mathcal{I}|}\sum_{h\in\mathcal{I}} p(\hat{\delta}_{[h]} - \delta_0 1_{\min\{q_1,q_0\}}, \G) \leq \alpha \mid \mathcal{I} \Biggr)\\	
	&\qquad  \geq \prob\Biggl( \frac{2}{|\mathcal{I}|} \sum_{h\in\mathcal{I}} p(X_{[h]}+ c\lambda 1_{q_1}) < \alpha \mid \mathcal{I} \Biggr) \to 1
	\end{align*}
	almost surely as $c \to \infty$, as required.
\end{proof}

\begin{proof}[Proof of Proposition \ref{pr:stochastic}]
	Limits are as $m\to\infty$ unless noted otherwise. Consider a process $X_n$ possibly depending on $n$ and recall that  $T(X_n) > T^{1-\alpha}(X_n, \G_m)$ if and only if $\hat{p}_m := p(X_n,\G_m) \leq \alpha$. Let $p := p(X_n,\G)$ and notice that $\ev (\hat{p}\mid X_n) = p$. For almost every realization of $X_n$,  $\hat{p}_m$ is an average of bounded iid random variables that satisfies $\prob(|\hat{p}_m - p| > \varepsilon\mid X_n) \to 0$ almost surely for every $\varepsilon > 0$. Conclude from dominated convergence that this convergence also holds unconditionally and therefore $\hat{p}_m\leadsto p$. Because $p$ can only vary at the points $j/|\G|$, $1\leq j\leq |G|$, $\prob(\hat{p}_m \leq \alpha) \to \prob(p \leq \alpha)$ as long as $\alpha \neq j/|\G|$. If $\alpha$ equals $j/|\G|$ for some $j$, use $0 < \varepsilon < 1/|\G|$ and monotonicity to see that $\prob(\hat{p}_m \leq \alpha - \varepsilon) \leq \prob(\hat{p}_m \leq \alpha) \leq \prob(\hat{p}_m \leq \alpha + \varepsilon)$ must satisfy \[ \prob(p\leq \alpha - \varepsilon) \leq \liminf_{m\to\infty}\prob(\hat{p}_m \leq \alpha)\leq \limsup_{m\to\infty}\prob(\hat{p}_m \leq \alpha) \leq \prob(p\leq \alpha + \varepsilon). \] Let $\varepsilon \searrow 0$ to see that the extreme right-hand side can be decreased to $\prob(p\leq \alpha)$.
	
For Theorem \ref{th:size}, apply this result to obtain \[ \limsup_{m\to\infty}  \prob\bigl(T(X_n) > T^{1-\alpha}(X_n, \G_m)\bigr) \leq \prob\bigl( p(X_n,\G) \leq \alpha \bigr) = \ev \varphi_\alpha(X_n,\G). \] Now apply limits as $n\to\infty$. 

For Theorem \ref{th:size2}, consider stochastic processes $X_{n,h}$ indexed by $h$ and $n$. The continuous mapping theorem implies $2\sum_{h\in\mathcal{H}} p(X_{n, h},\G_m)/H \pto 2\sum_{h\in\mathcal{H}} p(X_{n, h},\G)/H$ and therefore $2\sum_{h\in\mathcal{H}} p(X_{n, h},\G_m)/H \leadsto 2\sum_{h\in\mathcal{H}} p(X_{n, h},\G)/H$. Using the same argument as before gives
\begin{align*}
\limsup_{m\to\infty}\prob\Biggl( \frac{2}{H}\sum_{h\in\mathcal{H}} p(X_{n, h}, \G_m) \leq \alpha \Biggr) \leq \prob\Biggl( \frac{2}{H}\sum_{h\in\mathcal{H}} p(X_{n, h}, \G) \leq \alpha \Biggr)
\end{align*}
	
For Theorem \ref{th:power}, if $\alpha > 1/2^q$, there is a $\varepsilon > 0$ such that $\alpha - \varepsilon > 1/2^q$.  Then \[ \liminf_{m\to\infty}  \prob\bigl(T(X_n) > T^{1-\alpha}(X_n, \G_m)\bigr) \geq \prob\bigl( p(X_n,\G) \leq \alpha - \varepsilon \bigr) = \ev \varphi_{\alpha-\varepsilon}(X_n,\G) \] and Theorem \ref{th:power} applies directly to the extreme right-hand side.

For Theorem \ref{th:power2}, there is a $\varepsilon > 0$ such that $\alpha - \varepsilon > 1/2^{q-1}$. Then \[
\liminf_{m\to\infty}\prob\Biggl( \frac{2}{H}\sum_{h\in\mathcal{H}} p(X_{n, h}, \G_m) \leq \alpha \Biggr) \geq \prob\Biggl( \frac{2}{H}\sum_{h\in\mathcal{H}} p(X_{n, h}, \G) \leq \alpha -\varepsilon \Biggr)\] and Theorem \ref{th:power2} can be applied to the extreme right-hand side.
\end{proof}

\end{document}